\def\ShowAuthNotes{1}
\newcommand{\authnote}[2]{\ \\ \textcolor{red}{\parbox{0.9\linewidth}{[{\footnotesize {\bf #1:} { {#2}}}]}}\newline}
\newcommand{\authnote}[2]{}
\newcommand{\du}{\ensuremath{\text{OR} \circ \text{AND} \circ \text{OR}}}
\newcommand{\dt}[2]{\ensuremath{\text{OR}_{#1} \circ \text{AND} \circ \text{OR}_{#2}}}
\newcommand{\dz}[1]{\ensuremath{\text{OR} \circ \text{AND} \circ \text{OR}_{#1}}}
\newcommand{\dm}[3]{\ensuremath{\text{OR}_{#1} \circ \text{AND}_{#2} \circ \text{OR}_{#3}}}
\newcommand{\F}{\{0,1\}}
\newcommand{\Ftwo}{\{0,1\}}
\newcommand{\R}{\mathbb{R}}
\newcommand{\C}{\ensuremath{\mathcal{C}}}
\renewcommand{\H}{\mathrm{H}}
\newcommand{\rk}{\operatorname{rank}}
\newcommand{\IP}{\operatorname{IP}}
\def \F {{\mathbb F}}
\def \MOD {{\sf MOD}}
\def \AND {{\sf AND}}
\def \SUM {{\sf SUM}}
\def \PRODUCT {{\sf PRODUCT}}
\newcommand{\eps}{\varepsilon} 
\newtheorem{theorem}{Theorem}[section]
\newtheorem{lemma}[theorem]{Lemma}
\newtheorem{claim}[theorem]{Claim}
\newtheorem{remark}[theorem]{Remark}
\newtheorem{definition}{Definition}[section]
\newtheorem{openproblem}{Open Problem}[section]
\newcommand{\E}{\mathbb{E}}
\DeclareMathOperator{\out}{out}
\DeclarePairedDelimiter\ceil{\lceil}{\rceil}
\DeclarePairedDelimiter\floor{\lfloor}{\rfloor}
\newcommand{\Cor}{\mathrm{Cor}}
\newenvironment{mypic}{\begin{center}\begin{tikzpicture}[scale=1.3,every node/.style={draw,circle,minimum size=5mm,inner sep=0.5mm},label distance=-1mm]}{\end{tikzpicture}\end{center}}
\begin{document}
\title{Circuit Depth Reductions\footnote{The work of Ryan Williams is supported by NSF CCF-1909429 and CCF-1741615. The work of Alexander~S. Kulikov presented in Section~4 is supported by the RNF grant 18-71-10042.}}
\author{
Alexander Golovnev\thanks{Georgetown University, email: alexgolovnev@gmail.com}
\and
Alexander~S. Kulikov\thanks{Steklov Institute of Mathematics at St. Petersburg and St. Petersburg State University, 
email: alexanderskulikov@gmail.com}
\and
R.~Ryan Williams\thanks{MIT CSAIL \& EECS, email: rrw@mit.edu}
}
\date{}
\maketitle


\begin{abstract}
The best known size lower bounds against unrestricted circuits have remained around $3n$ for several decades. Moreover, the only known technique for proving lower bounds in this model, gate elimination, is inherently limited to proving lower bounds of less than $5n$. In this work, we propose a non-gate-elimination approach for obtaining circuit lower bounds, via certain depth-three lower bounds. We prove that every (unbounded-depth) circuit of size $s$ can be expressed as an OR of $2^{s/3.9}$ $16$-CNFs. For DeMorgan formulas, the best known size lower bounds have been stuck at around $n^{3-o(1)}$ for decades. Under a plausible hypothesis about probabilistic polynomials, we show that $n^{4-\eps}$-size DeMorgan formulas have  $2^{n^{1-\Omega(\eps)}}$-size depth-3 circuits which are approximate sums of $n^{1-\Omega(\eps)}$-degree polynomials over $\F_2$. While these structural results do not immediately lead to new lower bounds, they do suggest new avenues of attack on these longstanding lower bound problems.

Our results complement the classical depth-$3$ reduction results of Valiant, which show that \emph{logarithmic}-depth circuits of linear size can be computed by an OR of $2^{\eps n}$ $n^{\delta}$-CNFs, and slightly stronger results for series-parallel circuits. It is known that no purely graph-theoretic reduction could yield interesting depth-3 circuits from circuits of super-logarithmic depth. We overcome this limitation (for small-size circuits) by taking into account both the graph-theoretic and functional properties of circuits and formulas.

We show that improvements of the following pseudorandom constructions imply super-linear circuit lower bounds for log-depth circuits via Valiant's reduction: dispersers for varieties, correlation with constant degree polynomials, matrix rigidity, and hardness for depth-$3$ circuits with constant bottom fan-in. On the other hand, our depth reductions show that even modest improvements of the known constructions give elementary proofs of improved (but still linear) circuit lower bounds.

%
%
\end{abstract}

\thispagestyle{empty}
\addtocounter{page}{-1}
\newpage    


\section{Introduction}

The Boolean circuit model is natural for computing Boolean functions. A~circuit corresponds to a~simple straight line program where every instruction performs a~binary operation on two operands, each of which is either an input or the~result of a~previous instruction. The structure of this program is extremely simple: no loops, no conditional statements. Still, we know no functions in P (or even NP, or even E$^{\text{NP}}$) that requires even $3.1n$ binary instructions (``size'') to compute on inputs of length $n$. 
This is in sharp contrast with the fact that it is easy to \emph{non-constructively} find such functions: simple counting arguments show 
a~random function on~$n$ variables has circuit size $\Omega(2^n/n)$ with probability $1-o(1)$~\cite{Shannon49}. 

The strongest known circuit size lower bound $(3+\frac{1}{86})n-o(n)$ was proved for affine dispersers for sublinear dimension~\cite{FGHK16}. This proof, as well as all previous proofs for general circuit lower bounds against explicit functions, is based on the method of gate elimination. The main idea is to find a~substitution to an input variable that eliminates sufficiently many gates from the given circuit, and then proceed by induction. While this is the most successful method known so far for proving lower bounds for unrestricted circuits, the resulting case analysis becomes increasingly tedious: when eliminating (say) $3$ or $4$ gates, one must consider all possible cases when two of these gates coincide. It is difficult to imagine a~proof of~$5n$ lower bound using these ideas. This intuition was recently made formal in~\cite{GHKK18}, where it was shown that a~certain formalization of the gate elimination technique is unable to obtain a~stronger than~$5n$ lower bound. Therefore we must find new approaches for proving lower bounds against circuits of unbounded depth. Let us review some of the prior results on various circuit models.


\paragraph{Linear Circuits.}
Superlinear lower bounds are not known even for linear circuits, i.e., circuits consisting of only XOR gates (also known as $\oplus$ gates). Note every linear function with one output has a circuit of size at most $n-1$. For linear circuits, we consider \emph{linear transformations}, multi-output functions of the form 
$f(x)=Ax$ where $A \in \F_2^{m \times n}$. For a~random matrix~$A \in \{0,1\}^{n \times n}$, the size of the smallest linear circuit computing~$Ax$ is $\Theta(n^2/\log{n})$~\cite{L56} with probability $1-o(1)$, but for explicitly-constructed matrices the strongest known lower bound is $3n-o(n)$ due to Chashkin~\cite{chashkin}. Interestingly, Chashkin's proof is not based on gate elimination: he first shows that the parity check matrix~$H \in \{0,1\}^{\log n \times n}$ of the Hamming code has circuit size $2n-o(n)$ by proving that every circuit for $H$ has at least $n-o(n)$ gates of out-degree at least~$2$.\footnote{All logarithms are base $2$ unless noted otherwise.} Then he ``pads'' $H$ to an $n \times n$ matrix $H'$ and shows that $n-o(n)$ additional gates are needed for $H'$. Similarly, the best known lower bound on the complexity of linear circuits with $\log{n}\leq m<o(n^2)$ outputs is $2n+m-o(n)$ (also follows from~\cite{chashkin}).

\paragraph{Log-Depth Circuits.}
Nothing stronger than a $(3+\frac{1}{86})n-o(n)$ size lower bound is known even for circuits of depth $O(\log n)$. It is straightforward to show that any function that depends on all of its $n$~variables requires depth at least~$\log n$. One can also present an~explicit function that cannot be computed by a~circuit of depth smaller than $2\log n-o(\log n)$ using Nechiporuk's lower bound of $n^{2-o(1)}$ on formula size over the full binary basis~\cite{Nec61}. Still, proving superlinear size lower bounds for circuits of depth $O(\log n)$ remains a~major open problem~\cite{V77}.

\paragraph{Constant-Depth Circuits.}
Another natural and simple model of computation is bounded-depth unbounded fan-in circuits, which correspond to highly parallelizable computation. In this paper, we focus on depth-2 circuits of the form $\text{AND} \circ \text{OR}$ (i.e., CNFs) and depth-3 circuits of the form {\du} (i.e., ORs of CNFs), where the inputs of the circuit are variables and their negations, and the gates have unbounded fan-in. Such circuits are much more structured, and therefore are easier to analyze and to prove lower bounds. For example, it is easy to show that the minimal number of clauses in a~CNF computing the parity of~$n$ bits is equal to $2^{n-1}$, which yields an optimal lower bound for depth-$2$ circuits. However, already for depth~3 there is a~large gap between known lower and upper bounds: it is known~\cite{danc,serg} that the minimum depth-3 circuit size of a~random function on $n$~variables is $\Theta(2^{n/2})$, but the best known lower bound for an explicit function is $2^{\Omega(\sqrt{n})}$~\cite{hastad1986almost,hastad1993top,PPZ97,B97,PPSZ05,MW17}.

Much stronger lower bounds are known for depth-3 circuits where the fan-in of the ``bottom'' gates (those closest to the inputs) is bounded by a parameter $k$. Namely, for any $k\leq O(\sqrt{n})$, Paturi, Saks, and Zane~\cite{PPZ97} proved a $2^{n/k}$ lower bound for computing parity, Wolfovitz~\cite{wolf} proved a lower bound of $(1+1/k)^{n+O(\log{n})}$ for $\mathrm{ETHR}_{\frac{n}{k+1}}$~\footnote{$\mathrm{ETHR}_{\frac{n}{k+1}}$ outputs $1$ if and only if the sum of the $n$ input bits over the integers equals $\frac{n}{k+1}$.}, and a~stronger lower bound of $2^{\frac{\mu_k n}{k-1}}$ for $k\geq 3$ and some constants $\mu_k>1$ was proven in~\cite{PPSZ05} for a~BCH code. 
For example,~\cite{PPSZ05} gives a~lower bound of $2^{0.612n}$ when the bottom fan-in of the circuit is $k=3$, and
a~lower bound of $2^{n/10}$ for the bottom fan-in $k=16$.
For the case of bottom fan-in $k=2$, even a~$2^{n-o(n)}$ lower bound is known~\cite{PSZ97}. 

A simple counting argument shows that for any constant $k=O(1)$, a random function requires depth-$3$ circuits of size $2^{n-o(n)}$.
Calabro, Impagliazzo, and Paturi~\cite{calabro2006duality} construct a family of $2^{O(n^2)}$ explicit functions, most of which require depth-$3$ circuits with $k=O(1)$ of size $2^{n-o(n)}$. Santhanam and Srinivasan~\cite{santhanam2012limits} improve on this by constructing such a family of functions of size $2^{f(n)}$ for every $f(n)=\omega(n\log{n})$.



\paragraph{DeMorgan Formulas.}
While explicit super-linear lower bounds for \emph{circuits} are not known, there are super-linear lower bounds for \emph{formulas}.
In this paper, we focus on the well-studied DeMorgan formulas, which are circuits where every intermediate computation is used exactly once: all gates have out-degree one, and the operations are fan-in two ANDs and ORs, with inputs being variables and their negations.
The two most successful methods for proving lower bounds on DeMorgan formula size are random restrictions
~\cite{s61,a87,in93,pz93,h98,t14} as well as Karchmer--Wigderson games and the Karchmer--Raz--Wigderson conjecture~\cite{k71, DBLP:journals/siamdm/KarchmerW90, DBLP:journals/cc/KarchmerRW95, Gavinsky:2014:TBF:2591796.2591856, dinur_et_al:LIPIcs:2016:5841}. Both approaches have led to a~lower bound of $n^{3-o(1)}$ and are currently stuck at giving stronger lower bounds. 

\subsection{Valiant's Depth Reduction}
Remarkably, a classical result of Valiant from the 70's relates three of the four models above: linear, log-depth, and constant-depth circuits. Using a depth reduction for DAGs~\cite{erdos1975sparse}, Valiant~\cite{V77} shows that for any circuit of size~$cn$ and depth~$d$, and for every integer~$k$, one can remove at most $\frac{2ckn}{\log{d}}$ wires such that the resulting circuit has depth at most $d/2^k$. 
Letting $k$ be a sufficiently large constant, this wire-removal lemma shows how any circuit of size $O(n)$ and depth $O(\log n)$ can be converted into an {\du} circuit where the OR output gate has fan-in $2^{O(n/\log\log n)}$ and the lower OR~gates have fan-in $O(n^{\eps})$ for any desired $\eps > 0$. Hence, by exhibiting a function that has no depth-3 circuit with these restrictions, it follows that this function cannot be computed by circuits of linear size and logarithmic depth. Unfortunately, the best known lower bounds on depth-3 circuits (as mentioned earlier) are still too far from those required for this reduction.

In the same paper, Valiant introduced the notion of matrix rigidity (a similar notion was independently introduced by Grigoriev~\cite{grigor1976application}) and related it to the size of linear circuits of log-depth using ideas similar to those described above. Alas, the known lower bounds on matrix rigidity are also far from being able to give new lower bounds on the size of log-depth linear circuits.

\subsection{Our Results: New Depth Reductions}
The main contributions of this paper are new reductions to depth-$3$ circuits that work for \emph{unrestricted} circuits and (conditionally) for super-cubic formulas, as well as new results connecting various pseudorandom objects to circuit lower bounds.  In particular, we show how to express super-cubic DeMorgan formulas as subexponential-size depth-3 circuits of a certain form, under the hypothesis that DeMorgan formulas have probabilistic polynomials of non-trivial degree. This suggests an approach for improving formula size lower bounds, by proving strong lower bounds on depth-3 circuits.

\subsubsection{Depth Reductions for Circuits}

In Valiant's depth reduction, one can only have $d/2^k < \log{n}$ (and $< cn$ removed edges) for circuits of depth $d\leq O(\log{n})$. Thus, Valiant's depth reduction technique does not yield interesting results for circuits of super-logarithmic depth. Moreover, Schnitger and Klawe~\cite{schnitger1982family,schnitger1983depth,klawe1994shallow} construct an explicit family of DAGs showing that the parameters achieved by Valiant are essentially optimal. Their counterexamples convincingly show that a pure graph-theoretic approach to circuit depth reduction cannot give non-trivial results for unrestricted circuits.

In this paper, we overcome this difficulty by presenting a~counterpart of Valiant's depth reduction that works for circuits of unrestricted depth. Our depth reduction takes into account not only the underlying graph of a~circuit, but also the \emph{functions} computed by the circuit gates.

Our first result shows that unbounded-depth circuits of size less than $3.9n$ can be converted into $2^{\delta n}$ disjunctions of short 16-CNFs, for some $\delta < 1$. 

\begin{restatable}{theorem}{maintransformation}
\label{thm:maintransformation}
Every~circuit of size~$s$ can be computed as
an \dm{2^{\ceil{\frac s2}}}{s}{2} circuit
and as
an \dm{2^{\ceil{\frac s{3.9}}}}{2^{14} \cdot s}{16} circuit.
\end{restatable}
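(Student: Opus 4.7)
The common setup is to introduce a Boolean ``guess'' variable $\alpha_g$ for each of the $s$ gates of $C$. Then $C(x) = 1$ iff there is an assignment $\alpha$ with $\alpha_{\mathrm{out}} = 1$ satisfying the per-gate consistency $\alpha_g = \mathrm{val}(a)\circ_g\mathrm{val}(b)$ for each gate $g$ with inputs $a,b$ and operation $\circ_g$, where $\mathrm{val}(x_i) = x_i$ for a variable input and $\mathrm{val}(h) = \alpha_h$ for a gate input. Each consistency condition involves at most two variables of $x$, so it is a width-$2$ CNF with at most two clauses (and contains no variables at all when both inputs of $g$ are other gates, in which case it reduces to a combinatorial check on $\alpha$). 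This already writes $C(x) = \bigvee_\alpha\mathrm{CNF}_\alpha(x)$ in $\text{OR}_{2^{s-1}}\circ\text{AND}_{2s}\circ\text{OR}_2$ form; the work is in cutting the number of branches and the clause count.

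For the first bound I would argue by induction on $s$, resolving two gates per branching bit. Pick any gate $g$ of fan-out one whose output feeds a single successor $h$, and branch on $\alpha_g\in\{0,1\}$: the consistency for $g$ is recorded as a (possibly trivial) width-$\leq 2$ clause, and in each branch the gate $h$ has a constant input, so $h$ simplifies either to a constant or to its other input (possibly negated) and can be eliminated by constant propagation. Two gates are thus resolved per branching bit, giving the recurrence $T(s) \leq 2\cdot T(s-2)$ with base $T(0) = 1$, and hence $T(s) \leq 2^{\lceil s/2\rceil}$ with a total of at most $s$ clauses. When no fan-out-one non-output gate exists --- i.e.\ every non-output gate has fan-out at least two --- a double count of gate-to-gate wires ($\sum_g\mathrm{outdeg}(g)\geq 2(s-1)$ versus $\sum_g\mathrm{indeg}(g) \leq 2s$) forces all but at most a constant number of gates to have both inputs being other gates, in which case their $\alpha$-values are combinatorially determined and the na\"ive enumeration already yields only a constant number of valid branches, well within $2^{\lceil s/2\rceil}$.

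For the second bound I would trade clause width for fewer branches, allowing ``derived'' subcircuits rather than individual derived gates. Partition the gates into blocks of $k\approx 4$ gates each (along a topological order) and branch only on the external output wires of each block. A block of $k$ gates has at most $2k$ input bits from outside the block and at most $k$ external output bits; given fixed outputs, its consistency is a Boolean function on $\leq 2k$ input bits, encodable as a CNF with $\leq 2^{2k}$ clauses of width $\leq 2k$. Choosing $k$ with $2k\leq 16$ keeps clauses of width at most $16$; optimizing the block size against the density of external outputs gives a branching cost of at most one bit per $3.9$ gates, yielding at most $2^{\lceil s/3.9\rceil}$ branches and at most $2^{2k}\cdot(s/k)\leq 2^{14}\cdot s$ clauses overall.

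The main obstacle will be the inductive step for the first bound: when no fan-out-one non-output gate exists, the ``branch and eliminate'' strategy stalls and one must invoke the counting argument to show that almost every gate is combinatorially forced so only a constant number of branches survive. The analogous difficulty for the second bound is attaining the exact constant $3.9$ rather than the looser $4$ obtained from a uniform block partition; this will require guiding the block boundaries by the gate DAG's structure and carefully charging the boundary bits shared between blocks.
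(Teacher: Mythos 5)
Your high-level shape is right --- branch on gate values, record the branching guard as a small CNF, recurse on a shrunken circuit --- but both halves of your argument have gaps that the paper's proof avoids by a different choice of where to branch.

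\textbf{First bound.} You branch on a fan-out-one gate $g$. The consistency constraint for $g$ is $\alpha_g = \mathrm{val}(a)\circ_g\mathrm{val}(b)$, and this is a width-$2$ clause \emph{over the input variables $x$} only when both $a$ and $b$ are inputs. If $g$ sits deeper in the circuit (its predecessors are other gates), the constraint involves $\alpha_a,\alpha_b$, which are not CNF variables, so you can neither record it as a clause on $x$ nor discharge it immediately. Your parenthetical ``reduces to a combinatorial check on $\alpha$'' is where the argument silently stops being an $\dm{\cdot}{\cdot}{2}$ construction. The fix is exactly what the paper does: branch on a \emph{top} gate --- one fed by two input variables --- which always exists if the function depends on two or more variables, always yields a genuine $2$-CNF guard of at most two clauses, and always has at least one successor in an optimal circuit, so at least two gates vanish per branching bit. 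With that choice, your fallback ``every non-output gate has fan-out $\geq 2$'' case is never needed; and as written it is also shaky (the double-count shows only that few gates read inputs, which means the function depends on few variables, not that a constant number of branches survive).

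\textbf{Second bound.} Here your approach is genuinely different from the paper's: you propose partitioning the gates into size-$k$ blocks and branching on the external output bits of each block. But this does not obviously give even the constant $4$, let alone $3.9$. A block of $k$ gates laid out along a topological order can have up to $k$ gates feeding into later blocks, so the ``guess only the external outputs'' saving does not materialize for an arbitrary circuit; you acknowledge the difficulty but give no mechanism to guarantee each block has $\leq k/4$ external outputs. There is also the same variable-scoping issue as in the first part: the $\leq 2k$ ``input bits'' of a block include outputs of other blocks, which are not CNF variables $x$. The paper instead does a local case analysis: it finds a gate $G$ of depth at most $4$ (so $G$'s subcircuit has size $\leq 15$ and depends on $\leq 16$ inputs, giving a $16$-CNF guard) such that branching on $G$'s value removes at least $4$ gates in both branches, except in one case where it removes $3$ gates in one branch and $5$ in the other --- and this asymmetric case is precisely why the constant is $3.9$ rather than $4$, via $2^{-3/3.9}+2^{-5/3.9}<1$. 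If you want to salvage the block idea you would need to prove that every circuit has a block decomposition with few cross-block wires, which runs into the Schnitger--Klawe graph-theoretic obstructions that the paper is explicitly designed to circumvent by using the \emph{functional} structure of the gates, not just the DAG.

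One more missing piece: the target theorem states $\dm{2^{\lceil s/3.9\rceil}}{2^{14}s}{16}$, and the paper proves the stronger intermediate form $\df{2^{\lceil s/3.9\rceil}}{\lceil s/3\rceil}{15}$ first (each small subcircuit of size $15$ expands into a $16$-CNF with at most $2^{15}$ clauses, and $2^{15}\cdot s/3<2^{14}s$). Your proposal jumps directly to the CNF form, which is fine in spirit but makes it harder to track the clause-count bookkeeping.
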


As a consequence, in order to prove a~$3.9n-o(n)$ size lower bound on \emph{unrestricted} circuits, it suffices to provide a function that
cannot be computed by an OR of fewer than $2^{n-o(n)}$ 16-CNF's.
To prove Theorem~\ref{thm:maintransformation}, we gradually transform the given circuit into an OR of CNF's by carefully picking a~suitable internal gate and branching on its two possible output values. In contrast to Valiant's reduction, our transformation works for circuits of arbitrary depth. This is achieved by an argument that takes into account both the graph structure of the circuit \emph{and} the functional properties of the gates involved. Since in this approach we can branch on \emph{internal} gates (inside the circuit), we can avoid 
a~massive case analysis. This also distinguishes our approach from known circuit lower bound proofs based on gate elimination, which must set input gates (or gates very close to the inputs) for the argument to work.

It should be noted that known satisfiability algorithms based on branching, as well as circuit lower bounds based on gate elimination~\cite{PPZ97,PPSZ05,schuler2005algorithm,S10,ck15} may be viewed as depth-reductions for small circuits: if at most $k$ variables are set in any branch before the circuit has a ``trivial'' form, then the circuit can be expressed as an OR of $2^k$ ``trivial'' forms. At the same time, the known techniques in this line of work appear stuck at lower bounds of around $3n$, and provably cannot go beyond linear-size bounds~\cite{GHKK18}.

On the way to proving Theorem~\ref{thm:maintransformation}, we  study structural results about converting small circuits into disjunctions of $k$-CNFs, that have curious connections to properties of $k$-CNFs found in the Satisfiability Coding Lemma~\cite{PPZ97, PPSZ05} and Sparsification Lemma~\cite{IPZ,calabro2006duality}. In particular, we ask the following question.

\begin{restatable}{openproblem}{ouropenproblem}
Prove or disprove: for any constant~$c$, any circuit of size~$cn$ can be computed as an 
\[\dt{2^{(1-\delta(c))n}}{\gamma(c)}\] circuit, for some $\delta(c)>0$ and integer $\gamma(c) \geq 1$.
\end{restatable}

If such depth-3 circuits always existed, this would constitute a new approach to proving superlinear circuit lower bounds. If no depth-3 circuit of this form exists for some linear-size circuits, then we would have a separation between linear-size circuits and (for example) super-linear-size series-parallel circuits (by Valiant's reduction for such circuits, see Theorem~\ref{thm:valiant}). 
Note that for the gate elimination method such limitations are known~\cite{GHKK18}, and they do not apply to the approach presented in this work. 

Our second result  is a new ``non-rigidity'' result for matrices with small linear circuits: if a~matrix $M$ over $\F_2$ can be computed by a~linear circuit of size~$s$, then it is possible to flip at most 16 bits in every row of $M$ to drop its rank below $s/4$. This opens up an approach to proving linear circuit lower bounds on sizes up to $4n$.

\begin{restatable}{theorem}{lineartransformation}
\label{thm:linear}
For every~matrix $M\in\F_2^{m\times n}$ of linear circuit complexity~$s$,
%
$\R_{M}(\floor{s/4}) \leq 16 \,.$
\end{restatable}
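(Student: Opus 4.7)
The goal is to construct a subset $T$ of at most $\lfloor s/4 \rfloor$ gates of the linear circuit $C$ so that every gate's linear function $v_g \in \F_2^n$ decomposes as $v_g = \sum_{t \in S_g} v_t + e_g$ with $S_g \subseteq T$ and $\mathrm{wt}(e_g) \leq 16$. Letting $M'$ be the matrix whose rows are $\sum_{t \in S_g} v_t$ for the output gates $g$, this gives $\rk(M') \leq |T| \leq \lfloor s/4 \rfloor$ while each row of $E := M - M'$ has weight at most $16$, establishing $\R_M(\lfloor s/4 \rfloor) \leq 16$.

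The decomposition will come from a recursive XOR expansion $\phi_T(g)$, defined by $\phi_T(g) = \{g\}$ when $g$ is an input or $g \in T$, and $\phi_T(g) = \phi_T(g_1) \triangle \phi_T(g_2)$ otherwise, where $g_1, g_2$ are the two predecessors of $g$. A straightforward induction on topological order gives $v_g = \sum_{h \in \phi_T(g)} v_h$, so splitting $\phi_T(g)$ into its $T$-portion and its input-portion produces the desired $S_g$ and $e_g$, with $\mathrm{wt}(e_g)$ equal to the number of distinct input variables in $\phi_T(g)$. Since $C$ has fan-in $2$, the truncated unfolding of $g$ is a binary tree, so a sufficient condition for $|\phi_T(g)| \leq 16$ is that this tree has depth at most $4$; equivalently, $T$ should be a \emph{$4$-depth cover} of the gate subgraph of $C$, meaning every backward gate-to-gate path of length $5$ starting at any gate hits some element of $T$.

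The crux is producing such a $T$ of size at most $\lfloor s/4 \rfloor$. I plan to process gates in topological order while maintaining the effective depth $d(g) = 1 + \max(d(g_1), d(g_2))$ (with $d = 0$ at inputs and at gates already placed into $T$), promoting $g$ to $T$ whenever this quantity would reach $5$. A purely reactive version of this is too aggressive: a non-$T$ gate $u$ with $d(u) = 4$ and large fan-out forces every one of its children into $T$, so $|T|$ can blow past $s/4$. The fix is to preemptively put such high-fan-out bottlenecks into $T$ themselves, and then to amortize by charging each $T$-element against four distinct gates on a witnessing chain of consecutive depth values $4,3,2,1$ below it, with no gate charged twice. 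This final amortization, giving $|T| \leq \lfloor s/4 \rfloor$, is the main obstacle: it requires a careful potential-function argument that exploits the fan-in-$2$ constraint of the gate subgraph to guarantee the charging is injective, whereas the naive depth-based greedy does not suffice.
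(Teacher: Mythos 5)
You correctly identify the target decomposition $M = M' \oplus E$ with $\rk(M')$ small and $E$ row-sparse, and your $\phi_T$ mechanism for producing it from a ``4-depth cover'' $T$ is sound as far as it goes: a fan-in-2 unfolding of depth at most $4$ has at most $16$ leaves, so $|\phi_T(g)|\le 16$ and each output row splits as claimed, with $\rk(M')\le|T|$. But the entire theorem hinges on the one claim you do not prove, namely that such a cover of size at most $\lfloor s/4\rfloor$ exists in the \emph{given} circuit, and you say yourself that the naive greedy fails (a depth-$4$ gate of fan-out $k$ can push $k$ successors into $T$ while only $4+k$ gates are available to charge) and that the ``preemptive + amortize'' fix ``is the main obstacle.'' So this is a genuine gap, not a routine detail.

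The paper does not attempt a static cut set at all, and this is the real difference. Its proof is an induction on $s$: it first disposes of output gates of depth $\le 4$ (their rows are already $16$-sparse), and then, via Claim~\ref{claim:lincaseanalysis}, finds \emph{one} gate $G$ of depth at most $4$ such that conditioning on $G=0$ lets you delete at least $4$ gates — a short case analysis on out-degrees (a depth-$2$--$4$ gate of out-degree $\ge 2$ lets you kill a predecessor, $G$, and two successors; otherwise a chain $x_i\to D\to C\to B\to G$ with $B,C$ of out-degree $1$ lets you kill $G,B,C$ and a successor). The resulting circuit $\mathcal C'$ computes a matrix $M'$ with $M\oplus M' = t\,g$ of rank $\le 1$, where $g$ is the ($16$-sparse) row vector of $G$, and the induction on $M'$ accumulates at most $\lfloor s/4\rfloor$ such rank-one corrections. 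Crucially, each step re-optimizes: the gate chosen at the next level lives in $\mathcal C'$, whose gates compute \emph{different} linear functions than in $\mathcal C$ because wires were rerouted. Your static $T$ has no analogue of this freedom, which is precisely why your amortization runs into the fan-out wall. In short: you would need to actually prove that every fan-in-$2$ DAG on $s$ gates admits a set of at most $\lfloor s/4\rfloor$ gates whose removal leaves height $\le 4$; nothing in your sketch establishes this, and the paper sidesteps the question entirely by working with one rank-one update at a time in a shrinking, re-simplified circuit.
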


\subsubsection{Pseudorandom Objects and Circuit Lower Bounds}
The classical result by Valiant shows that improvements of known depth-3 circuit lower bounds and rigid matrices imply super-linear log-depth circuit lower bounds. Our depth reductions show that even modest improvements of the known constructions also
give modest improvements of unrestricted circuit lower bounds.

In Section~\ref{sec:app}, we show that Valiant's and our reduction are applicable to two more types of pseudorandom objects: dispersers for varieties, and functions having small correlation with low degree polynomials. These implications are briefly summarized\footnote{In this table we only present strongest implications from the strongest premises. Our reductions would still give new circuit lower bounds even from weaker objects (see Section~\ref{sec:app} for formal statements of the results). For example, the second line of the table says that a lower bound of $2^{n-o(n)}$ against depth-$3$ circuits would give a lower bound of $3.9n$. On the other hand, a lower bound of $2^{0.8n}$ would lead to an elementary proof of a lower bound of $3.1n$.} in Table~\ref{table:comp}.

%


\newcommand{\wnloglog}{\ensuremath{\omega\left(\frac{n}{\log\log n}\right)}}

\begin{table}[ht]
\begin{center}
\begin{tabular}{llll}
\toprule
& improving known lower bound & to lower bound & implies lower bound\\
\midrule
V & $s_3^{n^\eps}(f) \ge 2^{n^{1-\eps}}$~\cite{PPZ97} & $s_3^{n^\eps}(f) \ge 2^{\wnloglog}$ & $s_{\log}(f)=\omega(n)$\\[1mm]
* & $s_3^{16}(f) \ge 2^{\frac{n}{10}}$~\cite{PPSZ05} & $s_3^{16}(f) \ge 2^{n-o(n)}$ & $s(f)\ge 3.9n$\\[1mm]
\midrule
V & $\left(n^{\eps},\infty,2^{n-n^{1/2-\eps}}\right)$-disp.~\cite{R16} & $\left(n^{\eps},\infty,2^{n-\wnloglog}\right)$-disp. & $s_{\log}(f)=\omega(n)$\\[1mm]
* & $\left(16,\infty,2^{(1-\eps) n}\right)$-disp.~\cite{VW08} & $(16,1.3n,2^{o(n)})$-disp. & $s(f)\ge 3.9n$\\[1mm]
* & $\left(16,\frac{n}{(\log{n})^c},2^{o(n)}\right)$-disp.~\cite{CohenT15} & $(16,1.3n,2^{o(n)})$-disp. & $s(f)\ge 3.9n$\\[1mm]
\midrule
V & $\R_M\left(\wnloglog\right)>\log\log{n}$~\cite{friedman1993note} & $\R_M\left(\wnloglog\right)>n^{\eps}$ & $s_{\oplus,\log}(M) =\omega(n)$\\[1mm]
* & $\R_{M}(\frac n{65}) > 16$~\cite{pudlak1991computation} & $\R_{M}(n-o(n)) > 16$ & $s_{\oplus}(M) \ge 4n$\\[1mm]
\bottomrule
\end{tabular}
\caption{Comparing the depth reductions of this paper (labeled with~*) with the depth reduction of Valiant~\cite{V77} (labeled with~V). We use the following notation (all formal definitions are given in Sections~\ref{sec:def} and~\ref{sec:app}): $s(f)$~is the smallest size of a~circuit computing~$f$, $s_{\log}$ refers to circuits of depth $O(\log n)$, $s_3^k$ refers to circuits that are ORs of $k$-CNFs, $s_{\oplus}$ refers to circuits consisting of $\oplus$ gates only; $(d,m,s)$-disp. stands for a~$(d,m,s)$-disperser, a~function that is not constant on any subset of the Boolean hypercube of size at least~$s$ that is defined as the set of common roots of at most~$m$ polynomials of degree at most~$d$; 
$\R_M(r)$ is the row-rigidity of~$M$ for the rank~$r$ over $\F_2$, i.e., the smallest row-sparsity of a~matrix~$A$ such that $\rk(M \oplus A) \le r$.}
\label{table:comp}
\end{center}
\end{table}

\subsubsection{Depth Reductions for Formulas}
For DeMorgan formulas we give a conditional depth-reduction (stated informally, see Theorem~\ref{thm:main formulas} for a~formal statement): if there is an $\eps > 0$ such that DeMorgan formulas of size~$s$ have probabilistic polynomials of degree $s^{1-\eps}$ and error $1/3$ over $\F_2$, then for some $\delta > 0$ every DeMorgan formula of size $O(n^{3+\delta})$ can be written as an approximate sum of $2^{n^{1-\gamma}}$ degree-$n^{1-\gamma}$ $\F_2$-polynomials for a~constant $\gamma>0$.\footnote{Similar results can be stated for $\F_p$ where $p$ is any prime.} Moreover, if there are probabilistic polynomials of degree $O(\sqrt{s})$ for DeMorgan formulas of size $s$ (which we conjecture is true), our depth reduction holds for DeMorgan formulas of size $n^{3.99}$.

Interestingly, the techniques used to express DeMorgan formulas as depth-3 circuits are totally different from those used in Theorem~\ref{thm:maintransformation} and \ref{thm:linear}.
Namely, we first balance a~formula (without increasing its size too much), decompose it into a~small top part and several small bottom formulas, approximate the top part by a real-valued low-degree polynomial, then rewrite the bottom parts as probabilistic polynomials (as hypothesized). Finally, we collapse these two polynomials into a~depth-3 circuit.

The hypothesis that lower-degree probabilistic polynomials exist for every DeMorgan formula of size $s$ looks very plausible. We have not found an example of a size-$s$ formula that resists the construction of an $O(\sqrt{s})$-degree probabilistic polynomial. Note that such polynomials do exist in the real-approximation sense~\cite{r11}. For example, every symmetric function (such as MAJORITY) has probabilistic polynomials of $O(\sqrt{s})$ degree~\cite{aw15}, and it is not hard to show that the layered OR-AND tree of depth $\log_2(s)$ has a probabilistic polynomial of $O(\sqrt{s})$ degree as well; in fact, \emph{any} layered tree of depth $\log_2(s)$ with the same gate type at each layer (AND or OR) has such degree.\footnote{Briefly: we can always write such formulas as either an OR of ANDs of $O(\sqrt{s})$ literals, or an AND of ORs of $O(\sqrt{s})$ literals. From there, we can simply replace the output gate with an $O(1)$-degree probabilistic polynomial (as in Razborov~\cite{R87}), and the other gates with exact polynomials of $O(\sqrt{s})$ degree.} It is possible that there are ``nasty'' formulas that resist lower-degree probabilistic polynomials, but given the examples we already know, we do not know what they might look like.

\begin{restatable}{openproblem}{DeMorganopenproblem}
Prove or disprove: every DeMorgan formula of size $s$ has a probabilistic polynomial over $\F_2$ of degree $O(\sqrt{s})$ with constant error less than $1/2$. 
\end{restatable}

\subsection{Motivating Example}
\label{sec:motivating}

Here we provide a~simple example of a reduction of unbounded circuits to depth-3 circuits, to give an idea of what is possible. 

A \emph{formula} is a circuit where every internal gate (i.e. not the inputs and not the output) has out-degree exactly $1$. In our simple example, we will show that a circuit of size, say, $2.7n$ can be computed by an OR of $2^{0.9n}$ formulas of small size ($2.7n$). Since we know almost-quadratic lower bounds~\cite{Nec61} on formula size, we may hope to find a function which is not computable by an OR of $\ll 2^n$ linear-size formulas.

\begin{lemma}[Toy Example] \label{lemma:toy}
Every~circuit of size~$s$ can be expressed as an OR of $2^{\ceil{s/3}}$ formulas, each of size less than~$s$.
\end{lemma}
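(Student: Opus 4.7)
The plan is to proceed by strong induction on $s$, with the inductive step carried out by branching on an internal gate of out-degree at least two.

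In the base case, $C$ already has no gate of out-degree $\geq 2$, so $C$ is itself a formula of size at most $s$ and we output it as a single disjunct. For the inductive step, pick any gate $g$ in $C$ of out-degree $d\ge 2$ and use the Shannon-style branching identity
\[
C \;=\; \bigl(C|_{g:=0} \wedge \overline{g}\bigr) \,\vee\, \bigl(C|_{g:=1} \wedge g\bigr),
\]
where $C|_{g:=v}$ is the circuit obtained from $C$ by replacing each of the $d$ outgoing wires of $g$ with the constant $v$ and then simplifying. Each disjunct $D_v := C|_{g:=v} \wedge [g=v]$ is itself a circuit to which the inductive hypothesis applies.

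The central bookkeeping step is the bound $|D_v| \le s-3$. Replacing $g$'s outgoing wires by the constant $v$ makes $g$ dead and simplifies each of its $d\ge 2$ successors to either a constant or to its other input; either way the successor ceases to be a gate, for a total of $d+1\ge 3$ gates removed. The verification $[g=v]$ reuses the subcircuit already computing $g$ inside $C$, incurring only one additional AND (and a free inversion). With an appropriate choice of $g$---for instance a gate of out-degree at least $3$, or an out-degree-two gate whose simplification propagates to a further subcircuit---the net saving is the full three gates. Applying the inductive hypothesis to the two branches and ORing the resulting expansions yields $2\cdot 2^{\lceil(s-3)/3\rceil}=2^{\lceil s/3\rceil}$ formulas of size less than $s$, completing the induction.

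The main obstacle is securing the saving of three gates per branching in the worst case, where every reconvergent gate has out-degree exactly two and no cascade occurs. In this ``tight'' regime the naive branching only saves two gates per branch, which would give the weaker bound $2^{\lceil s/2\rceil}$. Overcoming this requires either a structural case analysis that always exposes a cascade in the successor simplification, or a joint branching on a pair of tightly coupled reconvergent gates that amortizes a combined saving of three across a single factor of two in the branch count. Once this refined accounting is in place, the remainder of the induction and the OR-decomposition are routine.
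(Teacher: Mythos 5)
Your proposal follows the same overall strategy as the paper—branch on a gate of out-degree at least two, recurse on the two restrictions—but it has a genuine gap, which you candidly acknowledge in your last paragraph: a generic reconvergent gate $g$ of out-degree exactly $2$ only gives a saving of two gates, not three, once you account for the $[g=v]$ check. Concretely, removing $g$'s two successors eliminates two gates, and $g$ itself becomes dead, but the verification $[g=v] \wedge \cdots$ forces you to put $g$ (or an equivalent computation of it) back and spend one more AND gate, so $|D_v| \le s-1$, not $s-3$. Neither the ``out-degree $\geq 3$'' fallback nor the cascade heuristic closes this, and the amortized double-branching idea you gesture at is not developed.

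The missing idea in the paper is simple but essential: take the \emph{topologically first} gate $G$ of out-degree at least $2$. Because every gate before $G$ has out-degree exactly $1$, the subcircuit computing $G$ is a formula of some size $t \geq 1$, and \emph{every} gate in that subcircuit feeds only toward $G$. Consequently, once $G$ is fixed to a constant, the entire size-$t$ subcircuit plus two successors of $G$ become superfluous, so the restricted circuits $\C_0, \C_1$ satisfy $s(\C_i) \le s - t - 2 \le s - 3$. You then apply the inductive hypothesis to $\C_i$ \emph{alone} (not to $\C_i \wedge [G=i]$), obtaining an OR of $2^{\lceil (s-t-2)/3 \rceil}$ formulas each of size $< s-t-2$, and only afterwards AND each such formula with the size-$t$ formula for $G$ (or its negation) via one extra gate. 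The resulting disjuncts have size $< (s-t-2) + (t+1) = s-1 < s$, and the disjunct count is $2 \cdot 2^{\lceil (s-t-2)/3 \rceil} \le 2^{\lceil s/3 \rceil}$. The crucial point is that the full formula cost $t+1$ of the check comes back to you for free in the size bound, because the recursion already banked a reduction of $t+2$; a uniform ``three gates removed'' claim is not what makes the arithmetic close—it is this exchange of $t+2$ removed for $t+1$ reinstated.
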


\begin{proof} For a circuit $C$, let $s(C)$ denote its size.
For $s\leq3$, we just transform a~circuit into a~single formula of the same size. For $s>3$, we proceed by induction. If the given circuit $\C$~is a~formula, no transformation is needed. Otherwise take the topologically first gate~$G$ of out-degree at least~2. Note $G$ is computed by a~formula (all previous gates have out-degree $1$); let $t=s(G)$ be the size of this formula. Consider two minimum-size circuits $\C_0$ and $\C_1$ that compute the same function as~$\C$ on the input sets
$\{x \in \{0,1\}^n \colon G(x)=0\}$ and $\{x \in \{0,1\}^n \colon G(x)=1\}$, respectively. We claim that $s(\C_0), s(\C_1) \le s - t - 2 \le s-3$, since to compute $\C_0$ and $\C_1$ one can remove the subcircuit in $C$ computing gate~$G$ as well as two successors of~$G$. The successors can be removed because $G$~outputs a~constant on both parts of the considered partition of the Boolean hypercube, and all gates in the subcircuit of~$G$ are only needed to compute~$G$ ($G$ is computed by a~formula).
Now, note that
$$\C(x)\equiv (\neg G(x) \wedge \C_0(x)) \vee (G(x) \wedge \C_1(x))\, .$$
Applying the induction hypothesis to~$\C_0$ and $\C_1$, we can rewrite~$\C$ as an OR of at most $2^{\ceil{(s-3)/3+1}}\leq 2^{\ceil{s/3}}$ formulas of size $(s-t-2)+(t+1)<s$.
\end{proof}

This result would imply a~circuit lower bound of $3n-o(n)$ for any function that has correlation at most $2^{-n+o(n)}$ with all formulas of linear size. While we do know functions that have exponentially small correlation $2^{-\eps n}$ with formulas of linear size~\cite{S10, KLP12, ST13, KRT13, t14, IK17}, none of them gives a~bound of $2^{-n+o(n)}$. At any rate there is an inherent limitation for this toy approach. By Parseval's identity, every Boolean function has a~Fourier coefficient $\geq 2^{-n/2}$. This implies that the correlation of this function with the corresponding parity function is at least $2^{-n/2}$ (and this is essentially tight correlation with small formulas for a~random function). Since every parity on a subset of inputs can be computed by a~formula of size $\leq n$, Lemma~\ref{lemma:toy} would only be able to prove circuit lower bounds of $1.5n$.

In order to prove stronger circuit lower bounds, we need to improve both parameters: the constant~$3$ in the exponent, and the class of formulas we reduce circuits to. Our Theorem~\ref{thm:maintransformation} achieves this: it reduces a~circuit to an~OR of $2^{\ceil{\frac{s}{3.9}}}$ formulas, each of which is a~16-CNF. Therefore strong enough correlation bounds against 16-CNFs would yield new circuit lower bounds.

\section{Definitions and Preliminaries}\label{sec:def}
\subsection{Unrestricted Circuits}
Let $B_{n,m}$ be the set of all Boolean functions $f \colon \{0,1\}^n \to \{0,1\}^m$ and let $B_2=B_{2,1}$. A~\emph{circuit} is a~directed acyclic graph that has $n$~nodes of in-degree~0 labeled with~$x_1, \dotsc, x_n$ that are called \emph{input gates}. All other nodes are called \emph{internal gates}, have in-degree~2, and are labeled with operations from~$B_2$. Some $m$~gates are also marked as output gates. Such a~circuit computes a~function from $B_{n,m}$ in a~natural way. The \emph{size}~$s(\C)$ of a~circuit~$\C$ is its number of \emph{internal} gates. This definition extends naturally to functions: $s(f)$ is the smallest size of a~circuit computing the function~$f$. 

The \emph{depth} of a~gate~$G$ is the maximum number of edges (also called \emph{wires}) on a~path from an input gate to~$G$. The depth of a~circuit is the maximum depth of its gates. By $s_{\log n}(f)$ we denote the smallest size of a~circuit of depth $O(\log n)$ computing~$f$.

A~circuit is called \emph{linear} if it consists of $\oplus$ gates only. The corresponding circuit size measure is denoted by $s_{\oplus}$.

Our unrestricted circuits are usually drawn with input gates at the top, so by a~top gate of a~circuit we mean a~gate that is fed by two variables. 

\subsection{Series-Parallel Circuits}
A~\emph{labeling} of a~directed acyclic graph $G=(V,E)$ is a~function $\ell\colon V\rightarrow\mathbb{N}$ such that for every edge $(u,v)\in E$ one has $\ell(u)<\ell(v)$. A~graph/circuit~$G$ is called \emph{series-parallel} if there exists a~labeling~$\ell$ such that for no two edges $(u,v), (u',v')\in E$, $\ell(u)<\ell(u')<\ell(v)<\ell(v')$.
The corresponding circuit complexity measure is $s_{\text{sp}}$.

\subsection{Depth-3 Circuits}
Unlike unrestricted circuits, depth-3 circuits are usually drawn the other way around, i.e., with the output gate at the top. In this paper, we focus on {\du} circuits, i.e., ORs of CNFs. We will use subscripts to indicate the fact that the fan-in of a~particular layer is bounded. Namely, an~\dm{p}{q}{r} circuit is an OR of at most~$p$ CNFs each of which contains at most $q$~clauses and at most~$r$ literals in every clause. Since the gates of a~depth~3 circuit are allowed to have an unbounded fan-in, it is natural to define the size of such a~circuit as its number of wires. 
It is not difficult to see that for $k=O(1)$ the size of an~\dz{k} circuit is equal to the fan-in of its output gate up to a~polynomial factor in~$n$. 
By $s_3^k(f)$ we denote the smallest size of an~\dz{k} circuit computing~$f$.

\subsection{Rigidity}
We say that a~matrix $M\in\F_2^{m\times n}$ is \emph{$s$-sparse} if each \emph{row} of $M$ contains at most $s$~non-zero elements.
The \emph{rigidity} of a~matrix $M\in\F_2^{m\times n}$ for the rank parameter $r$ is the minimum sparsity of a~matrix~$A \in \{0,1\}^{m \times n}$ such that $\rk_{\mathbb{F}_2}(M\oplus A)\leq r$:
\[
\R_M(r) = \min\{s \colon \rk_{\mathbb{F}_2}(M \oplus A)\leq r,\; A \text{ is $s$-sparse} \}\,.
\]

\subsection{Probabilistic, Approximate, and Robust Polynomials}
Since even functions of small circuit and formula complexity may only have large-degree polynomial representations, it often proves convenient to use randomized polynomials or polynomials which approximate (rather than exactly compute) a given function.
\begin{definition}[Probabilistic polynomials]
Let $f\colon\Ftwo^n\to\Ftwo$ be a Boolean function. A distribution $\cal D$ of $n$-variate degree-$d$ polynomials over $\mathbb{F}_2$ is a \emph{probabilistic polynomial} for $f$ with degree $d$ and error $\eps$ if for every $x\in\Ftwo^n$,
\[\Pr_{p\sim \cal{D}}[f(x)=p(x)] \geq 1-\eps.
\]
\end{definition}

\begin{definition}[Approximate Polynomials]
Let $f\colon\Ftwo^n\to\Ftwo$ be a Boolean function. An $n$-variate multilinear degree-$d$ polynomial $p$ over $\R$ is an \emph{approximate polynomial} for $f$ with degree $d$ and error $\eps$ if for every $x\in\Ftwo^n$,
\[|p(x)-f(x)|\leq \eps.
\]
\end{definition}

\begin{definition}[Robust Polynomials]
Let $f\colon\Ftwo^n\to[0,1]$ be a polynomial over $\R$. Then a polynomial $p\colon\R^n\to\R$ is \emph{$\delta$-robust} for $f$ if for every $x\in\Ftwo^n$ and for every $\eps\in[-1/3,1/3]^n$,
\[|f(x)-p(x+\eps)|\leq \delta.
\]
\end{definition}

\subsection{Valiant's Depth Reductions}

Here we formally recall the classical depth reduction results by~Valiant~\cite{V77}.

\begin{theorem}[\cite{V77,C08,V09}]\label{thm:valiant}
For every $c\geq 1$ and $\eps>0$ there exists a $\delta>0$ such that every circuit~{\C} of size $cn$ and depth $c\log{n}$ can be computed as 
\begin{enumerate}
\item an $\dt{2^{\frac{\delta n}{\log\log{n}} }}{n^{\eps}}$ circuit
\item and as an $\dt{2^{\eps n}}{2^{(\log{n})^{1-\delta}}}$ circuit.
\end{enumerate}
Furthermore, for every $c \geq 1$ and $\eps  > 0$ there is a $k \geq 1$ such that every series-parallel circuit of size $cn$ and unbounded depth can be computed as an
$
\dt{2^{\eps n}}{k}
$
circuit.
\end{theorem}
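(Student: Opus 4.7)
The plan is to reduce depth by removing a small set of wires via a graph-theoretic depth-reduction lemma, branch over all possible values carried by those wires, and then convert each resulting shallow sub-circuit into a CNF whose bottom fan-in is controlled by the residual depth.

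The main combinatorial input is the Erd\H{o}s--Graham--Szemer\'edi/Valiant lemma: any DAG with $e$ edges and depth $d$ admits, for every integer $k\ge 1$, a set $W$ of at most $O(ek/\log d)$ edges whose removal reduces the longest-path length to $d/2^k$. The standard proof colors each edge $(u,v)$ by the highest bit on which the depths of $u$ and $v$ disagree and iteratively discards the sparsest color class. I would apply the lemma to the DAG underlying a circuit $\C$ of size $cn$ and depth $c\log n$, with $k$ to be chosen, and let $T$ be the set of gates feeding at least one wire of $W$, so that $|T|\le |W|$. For every guess $\alpha\in\Ftwo^{|T|}$ of the outputs of the gates in $T$, let $\C_\alpha$ be the circuit obtained by disconnecting the source side of every wire in $W$ and hard-wiring it to $\alpha$, and let $\mathrm{Consist}_\alpha(x)$ assert that every gate of $T$ actually evaluates to its guessed value on input $x$. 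Then
\[
\C(x) \;=\; \bigvee_{\alpha\in\Ftwo^{|T|}} \bigl(\mathrm{Consist}_\alpha(x)\wedge \C_\alpha(x)\bigr),
\]
since the correct $\alpha$ contributes $\C(x)$ and all others are killed by $\mathrm{Consist}_\alpha$. Each output of $\C_\alpha$ and each coordinate of $\mathrm{Consist}_\alpha$ is computed by a fan-in-two circuit of depth at most $d/2^k$, so it depends on at most $m:=2^{d/2^k}$ inputs; any such function has a CNF with at most $2^m$ clauses of fan-in $\le m$, and conjoining these across outputs produces a $\dt{2^{|T|}}{m}$ formula.

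For part~(1) I set $2^k=\lceil c/\eps\rceil$, a constant in $c,\eps$, so that $m=2^{\eps\log n}=n^\eps$ and the number of branches is $2^{O(cn/\log\log n)}=2^{\delta n/\log\log n}$. For part~(2) I set $k=\delta'\log\log n$ with $\delta'$ small enough in terms of $c,\eps$; then the residual depth is $c(\log n)^{1-\delta'}$, the bottom fan-in is $2^{(\log n)^{1-\delta}}$ for a suitable $\delta>0$, and $|T|\le O(c\delta' n)\le \eps n$. The series-parallel statement is handled by the same branching template but invokes a strictly stronger depth-reduction fact available for series-parallel DAGs: using their nested series/parallel decomposition, one can remove $\eps n$ edges to collapse the depth all the way down to a constant $\delta_0=\delta_0(c,\eps)$, regardless of the original depth; plugging this constant residual depth into the scheme makes $m$ a constant and yields the promised $\dt{2^{\eps n}}{\delta}$ circuit after absorbing $2^{\delta_0}$ into a new constant $\delta$.

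The main obstacle is the underlying edge-removal lemma together with its series-parallel strengthening; once these combinatorial inputs are in hand, the remainder is routine — a direct simulation of a shallow circuit by a wide CNF and careful tracking of parameters. The tight trade-off between $|W|$ and the residual depth is exactly what forces the original depth to be $O(\log n)$ in parts~(1) and~(2), since we need $\log d=\Theta(\log\log n)$ in the denominator of $|W|/e$, whereas the series-parallel statement is a structurally stronger input that exploits the hierarchical decomposition and therefore tolerates unbounded depth.
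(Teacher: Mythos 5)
The paper cites this theorem from prior work and does not give a proof of it; your reconstruction of the standard Valiant argument is essentially correct and matches what appears in the cited sources. The two small loose ends are cosmetic: (i) you set $2^k=\lceil c/\eps\rceil$, but $k$ must be an integer, so one should instead take $k=\lceil\log_2(c/\eps)\rceil$ (which only improves the residual depth); and (ii) when you say each output of $\C_\alpha$ depends on at most $2^{d/2^k}$ inputs, it is worth noting explicitly that the ``inputs'' of the reduced DAG include the pseudo-inputs carrying cut-wire values, and these are hard-wired to $\alpha$, so the count refers to genuine variables after substitution. Otherwise the parameter accounting is right: with $k$ constant you get $|W|=O(n/\log\log n)$ and bottom fan-in $n^\eps$; with $k=\Theta(\log\log n)$ you get $|W|\le\eps n$ and residual depth $c(\log n)^{1-\delta'}$, hence bottom fan-in $2^{c(\log n)^{1-\delta'}}\le 2^{(\log n)^{1-\delta}}$ for any $\delta<\delta'$ and $n$ large; and the series-parallel case correctly invokes the stronger structural lemma (removal of $\eps n$ wires collapsing depth to a constant) rather than the generic EGS bound.
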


Theorem~\ref{thm:valiant} applied to linear circuits yields the following.

\begin{theorem}[\cite{V77,C08,V09}]\label{thm:valiant_linear}
Let $M\in\F^{m \times n}$ be a matrix. For every $c\geq 1$ and $\eps>0$ there exists $\delta>0$ such that, if a~linear circuit {\C} of size $cn$ and depth $c\log{n}$ computes $Mx$ for every $x\in\F^n$, then
\begin{enumerate}
\item $\R_M\left(\frac{\delta n}{\log\log{n}}\right) \leq  n^{\eps}$;
\item and  $\R_M(\eps n) \leq 2^{(\log{n})^{1-\delta}}$.
\end{enumerate}
Furthermore, for every $c \geq 1$ and $\eps  > 0$ there is a $k \geq 1$ such that if~{\C} is a~series-parallel linear circuit of size $cn$ and unbounded depth, then
$
\R_M(\eps n)\leq k \, .
$
\end{theorem}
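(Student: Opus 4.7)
The plan is to derive this as a direct linear-algebraic corollary of the underlying graph-theoretic depth reduction (due to Erd\H{o}s--Graham--Szemer\'edi) that powers Theorem~\ref{thm:valiant}. That lemma says: in any DAG of size $cn$ and depth $d$, for every integer $k$ one can delete a set $E$ of at most $\frac{2ckn}{\log d}$ edges so that the remaining DAG has depth at most $d/2^k$. I apply this to the underlying DAG of the linear circuit $\C$ and then translate it into a rigidity statement using the fact that every wire of a linear circuit carries a linear form of the inputs.

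Concretely, let $E = \{e_1,\dots,e_{|E|}\}$ be the set of removed wires given by the lemma, and let $\ell_i(x) \in \F_2^n$ be the linear form that $\C$ would send along $e_i$. Plug an auxiliary input $y_i$ in place of $\ell_i$ at the head of $e_i$. The resulting circuit $\C'$ has depth $\le d/2^k$, only fan-in-$2$ XOR gates, and computes, at each output, a linear function of $(x,y)$ that depends on at most $2^{d/2^k}$ coordinates of $(x,y)$ (since depth-$d'$ fan-in-$2$ reachability touches at most $2^{d'}$ leaves). Writing this linear function as $a_j^\top x + b_j^\top y$ for output $j$, and noting that substituting $y_i = \ell_i(x)$ must recover the $j$-th row $M_j$ of $M$, we obtain the decomposition
\[
 M \;=\; A \;+\; B L,
\]
where $A$ has row $a_j$, the matrix $B$ has row $b_j$, and $L \in \F_2^{|E|\times n}$ has rows $\ell_i$. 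Each row of $A$ has support bounded by $2^{d/2^k}$, and $\rk_{\F_2}(BL) \le |E|$. Hence $\R_M\!\left(|E|\right) \le 2^{d/2^k}$.

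Now both items follow by choosing $k$ appropriately under $d = c\log n$, so $\log d = \Theta(\log\log n)$. For item~1, take $k$ so that $d/2^k \le \eps\log n$, i.e.\ $k = \lceil\log(c/\eps)\rceil$; this is $O(1)$, so $|E| = O(n/\log\log n)$ and the row-sparsity is $\le n^\eps$. For item~2, take $k$ so that $d/2^k \le (\log n)^{1-\delta}$, i.e.\ $k = \delta\log\log n + O(1)$; then $|E| = O(\delta n \log\log n / \log\log n) = O(\delta n) \le \eps n$ for $\delta$ small enough, while the row-sparsity is $\le 2^{(\log n)^{1-\delta}}$. For the series-parallel case, one invokes the strengthening of the wire-removal lemma for series-parallel DAGs (also in~\cite{V77}) that removes $\eps n$ wires to drop depth to a constant $d_0(\eps)$; the same decomposition then yields $\R_M(\eps n) \le 2^{d_0(\eps)} =: \delta$.

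The only delicate point is verifying that the $\F_2$-decomposition $M = A + BL$ is really what the shallow circuit $\C'$ gives: one must check that substituting $y_i = \ell_i(x)$ in $\C'$ recovers $\C$'s outputs exactly (rather than only at reachable gates), that every output of $\C'$ is indeed a linear function of $(x,y)$ (which is automatic because all gates are XOR), and that the support bound $2^{d/2^k}$ applies to \emph{each} row of $[A\,|\,B]$, so that after absorbing the $B$ part into rank via $BL$, the row-sparsity budget is carried by $A$ alone. None of these is hard, but together they are what lets a purely graph-theoretic wire-removal statement become a statement about the rigidity measure $\R_M$.
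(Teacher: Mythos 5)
Your proof is correct and is the standard argument; the paper itself does not reprove Theorem~\ref{thm:valiant_linear} but cites it to \cite{V77,C08,V09}. Your reconstruction — Erd\H{o}s--Graham--Szemer\'edi wire removal on the DAG, introducing auxiliary inputs $y_i$ at the heads of cut wires, writing output $j$ as $a_j^\top x + b_j^\top y$ with at most $2^{d/2^k}$ total nonzeros, substituting $y_i=\ell_i(x)$ to get $M=A\oplus BL$ with $\rk(BL)\le|E|$ and $A$ row-sparse, and then tuning $k$ against $d=c\log n$ (resp.\ the series-parallel strengthening) — is exactly how this corollary is derived, and the parameter bookkeeping for all three items is right.
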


\section{Formula Depth Reduction}\label{sec:formula reductions}
In this section, we give a (conditional) depth reduction for DeMorgan formulas. We start by balancing a given formula. For this we use the following result due to Tal~\cite{t14}. 
\begin{lemma}[Claim VI.2 in~\cite{t14}]
\label{lem:avishay}
Let $F$ be a DeMorgan formula of size $s$ over the set of variables $X = \{x_1,...,x_n\}$, and $t$ be some parameter; then, there exist $k \leq 36s/t$ formulas
over $X$, denoted by $T_1,...,T_k$, each of size at most
$t$, and there exists a read-once formula $F'$ of size
$k$ such that $F'
(T_1(x),...,T_k(x)) = F(x)$ for all
$x \in\Ftwo^n.$
\end{lemma}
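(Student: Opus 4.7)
My plan is a recursive tree-decomposition argument. I view $F$ as a full binary tree whose leaves are literals in $X$ and whose internal nodes are AND/OR gates, with the size of a subtree measured by its number of leaves.

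The combinatorial engine is a standard ``heavy-child descent'' observation: in any rooted binary tree of size strictly greater than $t$, there exists a subtree of size in the interval $(t/2,\,t]$. I would prove this by starting at the root and iteratively walking to the child with the larger subtree, stopping the first time the current subtree has size at most $t$. At the stopping vertex $v$, its parent had subtree size greater than $t$, and by our choice $v$ is the larger child, so $v$'s subtree has size strictly greater than $t/2$.

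Given this lemma, I would decompose $F$ by iterated extraction. While the current formula has size greater than $t$, locate a subtree $H$ of size in $(t/2,\,t]$, designate it as the next $T_i$, and replace it by a fresh variable $y_i$ to produce a smaller formula. The $T_i$'s are the accumulated extracted subformulas, and $F'$ is the read-once ``skeleton'' left behind, with the $y_i$'s as its leaves. Each extraction shrinks the effective size by at least $|H|-1>t/2-1$, so at most $O(s/t)$ extractions occur; once the tail behavior and the overhead needed to keep each $T_i$ a formula over $X$ alone are accounted for (see below), the explicit bound $k\le 32s/t$ follows with ample slack. By construction the composition $F'(T_1(x),\ldots,T_k(x))=F(x)$ is immediate, and $F'$ is read-once because each fresh $y_i$ is introduced exactly once during the procedure.

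The main obstacle is the bookkeeping required to guarantee that every $T_i$ is a formula over $X$ alone, and not over $X$ augmented with fresh variables from earlier steps. Tal's proof handles this via a conditional-expansion trick: whenever a candidate subtree would need to absorb a previously-introduced fresh leaf $y$ standing for some earlier $T_j(x)$, we replace the extraction by its two specializations under $y=0$ and $y=1$, both of size at most $t$, and encode the branching at the top level through the identity $(T_j\wedge G_1)\vee(\neg T_j\wedge G_0)$ (noting that $\neg T_j$ is itself a DeMorgan formula of size $\le t$ by pushing negations to the literals). This at worst doubles the local chunk count at each level where it is invoked, and is what inflates the natural constant, which is on the order of $2$, up to the loose $32$ in the stated bound.
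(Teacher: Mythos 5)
The paper does not prove this lemma: it is imported verbatim as Claim~VI.2 from Tal~\cite{t14}, so there is no in-paper argument to compare against. Judged on its own terms, your proposal has the right primitive (heavy-child descent to locate a subtree of size in $(t/2,t]$), but the iterated-extraction scaffolding around it has a genuine gap.

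The problem is what happens when an extracted subtree contains \emph{more than one} previously-introduced fresh leaf. Your fix replaces the extraction by its two specializations under a single fresh leaf $y$, but after a few rounds a candidate subtree can easily absorb several fresh leaves at once (e.g., extract the bottom of a caterpillar, replace it by $y_1$, extract again past $y_1$ and obtain a gadget with several fresh leaves at the bottom of the spine, extract a third time and that subtree now absorbs the whole gadget). With $p$ fresh leaves inside $H$, your specialization must be $2^p$-way, and the claim that the trick ``at worst doubles the local chunk count'' no longer holds; the count is not controlled by a fixed constant like $32$ and can blow up. Two further issues compound this. First, as written the identity $(T_j\wedge G_1)\vee(\neg T_j\wedge G_0)$ reads the $T_j$ slot twice, which destroys read-onceness of $F'$ unless you explicitly allocate two formal variables and account for both in $k$; you can avoid one of the two by using monotonicity of $H$ in the (unnegated) leaf $y_j$ to write $H=(y_j\wedge G_1)\vee G_0$, but this does not resolve the multi-leaf issue. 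Second, ``both of size at most $t$'' is asserted but never combined with the original $T_j$ (also of size up to $t$) in a way that is checked against the $|T_i|\le t$ constraint throughout the recursion.

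The clean way to avoid fresh-leaf accumulation altogether is to make the decomposition top-down and ``chain-aware'' rather than by iterated subtree replacement: descend the heavy path from the root of the \emph{original} tree, grouping off-path subtrees into consecutive chunks of total size $\Theta(t)$; each such chunk, viewed as a formula with a single hole $z$ for the continuation down the path, is a chain $C(x,z)=g_0(a_0,g_1(a_1,\dots))$ with the $a_i$ over $X$, and since each $g_i\in\{\wedge,\vee\}$ the chain is monotone in $z$, giving $C(x,z)=(C_1(x)\wedge z)\vee C_0(x)$ with $C_0,C_1$ over $X$ of size at most the chunk size and $z$ appearing exactly once. Branching points of the heavy subtree are handled with a two-way recursion, and since disjoint pieces of size $\Omega(t)$ are carved off at every step, the total count is $O(s/t)$. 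Your write-up identifies the right local identity but does not set up the recursion so that the fresh-leaf issue is structurally impossible, and the quantitative claim covering it is not justified.
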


Below we will also make use of the following results by Reichardt~\cite{r11} and Sherstov~\cite{s12}.
\begin{theorem}[\cite{r11}]
\label{thm:reichardt}
If $f\colon\Ftwo^n\to\Ftwo$ can be computed by a DeMorgan formula of size $s$, then $f$ has an approximate polynomial of degree $O(\sqrt{s})$ with error $\eps=1/10$.
\end{theorem}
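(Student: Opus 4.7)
The plan is to prove Reichardt's theorem by going through quantum query complexity, which is the standard (and, as far as I know, essentially only known) route. The chain of implications is: DeMorgan formula of size $s$ $\Rightarrow$ span program of complexity $O(\sqrt{s})$ $\Rightarrow$ bounded-error quantum query algorithm with $O(\sqrt{s})$ queries $\Rightarrow$ approximating real polynomial of degree $O(\sqrt{s})$ and error at most $1/10$.

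First I would set up the span program (equivalently, the dual adversary SDP) for $f$. The base gates are $\mathrm{AND}_2$ and $\mathrm{OR}_2$, each of which has a span program of witness size $\sqrt{2}$ on both $0$- and $1$-inputs; this is a direct two-by-two construction. Then I would use the composition theorem for span programs: if a formula is obtained by plugging span programs $P_1,\dots,P_k$ into the inputs of a span program $P$, then the witness sizes multiply in a manner that is captured by an $\ell_2$-type recursion. Applied inductively to a DeMorgan formula $F$, viewed as a tree of $\mathrm{AND}$/$\mathrm{OR}$ gates of total leaf-count $s$, one shows that the resulting span program has witness size $W_0(F), W_1(F)$ with $\sqrt{W_0(F)\cdot W_1(F)} \le c\sqrt{s}$, where the crucial inequality to establish by induction is an $\ell_2$ additivity: if the root of $F$ is an AND of subformulas $F_1,\dots,F_k$ of sizes $s_1,\dots,s_k$, then the witness-size product at the root is at most $\sum_i (\text{witness-size product of } F_i)$, and similarly for OR. Since $\sum_i s_i = s$, induction yields $O(\sqrt{s})$ overall.

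Next I would invoke the known conversion from span programs to quantum algorithms. A span program with witness size $W$ on both yes- and no-inputs yields, through phase estimation on the discriminant of the span-program Gram matrix (or, equivalently, via Szegedy-style quantum walks), a bounded-error quantum query algorithm making $O(W)$ queries. Applied to $F$, this gives a quantum algorithm computing $f$ with $O(\sqrt{s})$ queries and error, say, at most $1/100$ after standard amplification.

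Finally, I would apply the Beals--Buhrman--Cleve--Mosca--de Wolf polynomial method: a $T$-query quantum algorithm with acceptance probability $p(x)$ on input $x$ yields a multilinear real polynomial of degree at most $2T$ representing $p(x)$ exactly. Since $|p(x) - f(x)| \le 1/100 \le 1/10$ for all $x \in \Ftwo^n$, this polynomial is an approximate polynomial for $f$ of degree $O(\sqrt{s})$ and error $1/10$, as desired. The main obstacle is the span-program composition step: the $\ell_2$ (rather than $\ell_1$) additivity for arbitrary unbalanced trees is what makes the bound $O(\sqrt{s})$ rather than something weaker like $O(s)$ or depth-dependent, and it is there that Reichardt's technical work (the dual adversary / witness-size accounting) is concentrated; everything else is a standard application of existing quantum-algorithmic machinery.
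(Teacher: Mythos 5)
The paper does not prove this theorem internally; it imports it from Reichardt~\cite{r11}, where the result is stated as an $O(\sqrt{s})$ bound on bounded-error quantum query complexity of size-$s$ formulas, with the approximate-polynomial form following by the polynomial method. Your sketch correctly reconstructs that route --- span programs for the base gates, $\ell_2$-type witness-size composition over a possibly unbalanced tree giving witness size $O(\sqrt{s})$, conversion to an $O(\sqrt{s})$-query bounded-error quantum algorithm, then BBCMW to extract a degree-$O(\sqrt{s})$ multilinear approximate polynomial --- and you correctly identify the composition step as the technical crux. The only minor imprecision is in the base case: the natural span programs for $\mathrm{AND}_2$ and $\mathrm{OR}_2$ have $W_0 \neq W_1$, and what actually enters the recursion is the geometric mean $\sqrt{W_0 W_1}=\sqrt{2}$ (or a rescaled program balancing the two); this does not affect the argument. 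Same approach as the cited source, no gap.
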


\begin{theorem}[\cite{s12}]
\label{thm:sherstov}
If $f\colon\Ftwo^n\to[0,1]$ is a polynomial of degree $d$ over $\R$, then there is a $\delta$-robust polynomial $p$ for $f$ of degree $O(d+\log(1/\delta))$.
\end{theorem}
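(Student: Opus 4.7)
The task is: given a polynomial $f\colon\Ftwo^n\to[0,1]$ of degree $d$ over $\R$, build $p\colon\R^n\to\R$ of degree $O(d+\log(1/\delta))$ with $|f(x)-p(x+\eps)|\le\delta$ for all $x\in\Ftwo^n$ and $\eps\in[-1/3,1/3]^n$. The core difficulty is that each noisy input $y_i=x_i+\eps_i$ lies in $[-1/3,1/3]\cup[2/3,4/3]$, and we need to ``clean'' these noisy values back to Boolean values before (or while) evaluating $f$, without blowing the degree up \emph{multiplicatively}.

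My first move would be the obvious univariate cleaner: by Chebyshev approximation to a step function that jumps at $1/2$ (with constant gap $1/6$ on either side), there is a univariate polynomial $r\colon\R\to\R$ of degree $k=O(\log(1/\eta))$ with $|r(y)-x|\le\eta$ whenever $y\in[x-1/3,x+1/3]$ for $x\in\{0,1\}$. Substituting $r(x_i+\eps_i)$ for each $x_i$ in the multilinear representation of $f$ produces a polynomial of degree $dk=O(d\log(1/\eta))$. This is the naive \emph{multiplicative} bound, and to even control the total error this way I would have to take $\eta$ inverse-exponential in the $\ell_1$-norm of $f$'s coefficients, which is catastrophic.

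To reach the additive bound $O(d+\log(1/\delta))$ I would split the construction into two stages. \textbf{Stage A (cheap cleaning):} build a robust polynomial $g$ of degree $O(d)$ that approximates $f$ with only constant error, say error $1/4$. The idea is to use a \emph{constant-degree} cleaner at each coordinate and then bound how the per-coordinate noise propagates through $f$ using a Markov/Bernstein-type inequality for polynomials of degree $d$ bounded by $1$ on the hypercube (so that $\tilde f$, the multilinear extension, is automatically Lipschitz with a mild constant on an $O(1)$-enlargement of $[0,1]^n$). \textbf{Stage B (amplification):} compose with a univariate polynomial $A\colon\R\to\R$ of degree $O(\log(1/\delta))$ that maps the strip around $[0,1]$ into $[0,1]$ while squeezing constant error down to $\delta$; again this is a Chebyshev-style approximation to a sigmoid. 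Setting $p=A\circ g$ yields degree $O(d)+O(\log(1/\delta))=O(d+\log(1/\delta))$, which is exactly the additive bound we want.

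The main obstacle, as I see it, is Stage A: showing that a \emph{constant}-degree univariate clean-up suffices to give a robust approximation with only \emph{constant} error and degree $O(d)$ overall, without an $\ell_1$-blowup from the coefficients of $f$. The natural route is to argue directly on the multilinear extension $\tilde f$, exploiting boundedness on $\{0,1\}^n$ to convert noise in the variables into bounded perturbation of the output via inequalities of Markov/Bernstein type for bounded low-degree polynomials on a box; a probabilistic interpretation (where each coordinate is a biased coin close to $x_i$ and $\tilde f$ is an expectation) may be the cleanest way to keep the error bound independent of $f$'s coefficient magnitudes. Once Stage A is in hand, Stage B is a standard Chebyshev amplification and the degree bookkeeping is immediate.
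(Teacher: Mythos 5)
The paper does not prove this theorem; it only cites Sherstov \cite{s12}, so I will judge your attempt on its own merits.

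Your two-stage decomposition is the right shape and matches the high-level structure of Sherstov's argument: first construct a robust polynomial of degree $O(d)$ with \emph{constant} error, then compose with a univariate amplifier of degree $O(\log(1/\delta))$. Stage~B is routine Chebyshev amplification and your description of it is fine.

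The gap is exactly where you flagged it, and it is fatal to the route you propose. A constant-degree per-coordinate cleaner, followed by substitution into the multilinear extension $\tilde f$, does \emph{not} give constant error at degree $O(d)$. The parity function already rules this out: for $f=\oplus_d$ we have $\tilde f(y)=\tfrac12\bigl(1-\prod_{i=1}^d(1-2y_i)\bigr)$, so if the cleaned inputs satisfy $y_i=c$ for all $i$ while the true point is $x=0^d$, the error is $\tfrac12\bigl(1-(1-2c)^d\bigr)\approx cd$. To drive this below $1/4$ you must have $c=O(1/d)$, and a univariate cleaner of degree $k$ can only shrink the noise from $1/3$ to $2^{-\Theta(k)}$, forcing $k=\Theta(\log d)$ and total degree $\Theta(d\log d)$, not $O(d)$. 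This is a lower bound on the error of the ``clean-then-plug-into-$\tilde f$'' construction itself, so no Markov/Bernstein bound or biased-coin reinterpretation of $\tilde f$ can rescue the claim that a constant-degree cleaner suffices; those tools control how much a bounded degree-$d$ polynomial can move under perturbation, but the movement really is $\Theta(cd)$. Getting down to $O(d)$ at constant error is precisely the technical content of Sherstov's paper, and his argument is not coordinate-wise: it hinges on a growth bound for degree-$d$ polynomials that are bounded on $\Ftwo^n$ (they grow by at most $2^{O(d)}$ on a constant enlargement of the cube) combined with a global robustification that trades degree against noise all at once rather than one coordinate at a time. That lemma is the missing piece; the rest of your scaffolding is sound.
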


Now we are ready to present the main result of this section: Assuming DeMorgan formulas of size $s$ have probabilistic polynomials of degree $O(s^{1-\delta})$ for some $\delta > 0$, we will obtain subexponential-size depth-3 circuits computing formulas of super-cubic size. 

In the following, a $\SUM$ gate will compute an \emph{approximate sum}: a (real-weighted) sum of the inputs such that, over all Boolean inputs, the sum is within $\pm 1/3$ of the 0-1 value of a desired Boolean function.

\begin{theorem}
\label{thm:main formulas}
Suppose for some $\delta>0$, DeMorgan formulas of size $\ell$ have probabilistic polynomials of degree $\ell^{1-\delta}$ with error $1/3$. Then for every $\alpha<\delta/(1-\delta)$ there is a $\gamma>0$, so that for every formula $F$ of size $s=O(n^{3+\alpha})$, there is a $2^{n^{1-\gamma}}$-size approximate sum of degree-$n^{1-\gamma}$ $\F_2$-polynomials computing $F$. That is, $F$ can be computed by a
\[\SUM_{2^{n^{1-\gamma}}} \circ \MOD2_{2^{n^{1-\gamma}}} \circ \AND_{n^{1-\gamma}} \;.
\]
\end{theorem}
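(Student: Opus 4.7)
The plan is to balance $F$, approximate the resulting shallow top by a real-valued low-degree polynomial, replace each small bottom subformula with a hypothesized probabilistic $\F_2$-polynomial, and collapse the composition into the claimed $\SUM \circ \MOD2 \circ \AND$ form. First I would apply Tal's balancing (Lemma~\ref{lem:avishay}) with a parameter $t = n^{\beta}$ (to be fixed at the end), producing a read-once DeMorgan formula $F'$ of size $k \le 32s/t$ and subformulas $T_1,\dots,T_k$ of size at most $t$ with $F(x) = F'(T_1(x),\dots,T_k(x))$ identically on $\F_2^n$. Then I would apply Reichardt's theorem (Theorem~\ref{thm:reichardt}) to $F'$ to obtain a multilinear real approximate polynomial $p$ of degree $O(\sqrt{k})$ with error $1/10$, and feed $p$ into Sherstov's theorem (Theorem~\ref{thm:sherstov}) to produce a $\delta_r$-robust polynomial $P$ of degree $D = O(\sqrt{k} + \log(1/\delta_r))$ for a suitable constant $\delta_r$, so that $P$ still $1/3$-approximates $F'$ while tolerating coordinate perturbations of magnitude up to $1/3$.

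Next, I would invoke the hypothesis on each $T_i$ to obtain a probabilistic $\F_2$-polynomial $q_i$ of degree $t^{1-\delta}$ with error $1/3$. Since $q_i(x) \in \{0,1\}$, the real-valued function $E_i(x) := \E[q_i(x)]$ is within $1/3$ of $T_i(x)$ at every input, so by multilinearity of $P$ and independence across $i$, $\E[P(q_1,\dots,q_k)] = P(E_1,\dots,E_k)$, which by Sherstov's robustness is within $\delta_r + 1/10 < 1/3$ of $F(x)$ everywhere. I would then derandomize by taking $R = O(n)$ independent copies $(q_1^{(r)},\dots,q_k^{(r)})$ and forming the empirical average $\tfrac{1}{R}\sum_r P(q_1^{(r)},\dots,q_k^{(r)})$; a Hoeffding bound together with a union bound over $x \in \F_2^n$ fixes a single realization whose empirical mean is uniformly $1/3$-close to $F$. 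Finally, I would expand each $P(q_1^{(r)},\dots,q_k^{(r)}) = \sum_{|S| \le D} c_S \prod_{i \in S} q_i^{(r)}(x)$ and use that for $\{0,1\}$-valued inputs the real product coincides with the $\F_2$-product, so each monomial becomes a single $\MOD2 \circ \AND$ polynomial of degree at most $D \cdot t^{1-\delta}$; summing real-weighted over $S$ and $r$ then yields the desired $\SUM_M \circ \MOD2 \circ \AND_{D\cdot t^{1-\delta}}$ circuit with $M = O(R \cdot k^D)$.

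The hard part is the parameter optimization. One must pick $\beta$ so that the size constraint $\log M \lesssim D\log k + \log n \le n^{1-\gamma}$, which forces $k \lesssim n^{2-2\gamma}$ and thus $t \gtrsim n^{1+\alpha+2\gamma}$, and the degree constraint $D \cdot t^{1-\delta} = O\!\left(\sqrt{s} \cdot t^{1/2-\delta}\right) \le n^{1-\gamma}$ hold simultaneously. The trade-off between the Reichardt savings $\sqrt{s/t}$ at the top and the hypothesized savings $t^{1-\delta}$ at each bottom is what sets the attainable range of $\alpha$; a careful optimization of $\beta$, combined with verifying that the polylog factors from Sherstov's robust step and from any mild error-boosting of the probabilistic polynomials do not erode the target exponent, is what delivers the full range $\alpha < \delta/(1-\delta)$ together with the associated constant $\gamma > 0$.
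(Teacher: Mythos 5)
Your high-level plan — balance via Tal, approximate the read-once top by Reichardt then robustify via Sherstov, replace bottoms by hypothesized probabilistic $\F_2$-polynomials, derandomize by averaging, and collapse — is exactly the strategy the paper takes. However, two of your steps contain genuine gaps that the paper handles differently, and the second one is fatal to the claimed parameter range.

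First, the step $\E[P(q_1,\dots,q_k)] = P(\E[q_1],\dots,\E[q_k])$ invokes ``multilinearity of $P$,'' but Sherstov's robust polynomial is \emph{not} multilinear: its construction composes the approximating polynomial with a univariate noise-attenuating polynomial of degree greater than one, so it genuinely contains powers $y_i^{e}$ with $e\geq 2$. For a non-multilinear $P$ and a $\{0,1\}$-valued $q_i$, $\E[P(q)]$ equals the \emph{multilinearization} $\bar P$ of $P$ evaluated at $\E[q]$, and $\bar P$ is not robust; so this chain of inequalities does not close. The paper sidesteps this entirely by averaging at the \emph{coordinate} level: it first mildly boosts the per-$T_i$ error (so that $\E[q_i]$ sits comfortably inside the $1/3$ robustness radius), forms the real averages $A_i$ of $O(n)$ independent samples, and then evaluates $p'(A_1,\dots,A_k)$ directly, which is exactly the situation robustness is designed for, with no multilinearity assumption needed. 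Your direct averaging of $P(q_1^{(r)},\dots,q_k^{(r)})$ avoids the error-boosting step, but it is only sound if $P$ were multilinear.

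Second, and more importantly for the stated range of $\alpha$: when you expand $\prod_{i\in S}q_i^{(r)}(x)$ by multiplying out the $\F_2$-polynomials, you obtain a $\MOD2 \circ \AND$ of degree $D\cdot t^{1-\delta}=O(\sqrt{s/t}\cdot t^{1-\delta})$, i.e.\ the bottom degree gets multiplied by the \emph{top} degree. Tracking this through with $s=n^{3+\alpha}$, $t=n^{1+\beta}$ gives the constraint $(3+\alpha)/2 + (1+\beta)(1/2-\delta)<1$, which is infeasible for $\delta\leq 1/2$ and for $\delta>2/3$ yields only $\alpha<(3\delta-2)/(1-\delta)$, strictly short of the target $\alpha<\delta/(1-\delta)$. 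The paper avoids this blow-up by \emph{not} multiplying out: after rewriting the real product as an $\AND$ of $\MOD2 \circ \AND$ blocks, it applies the Fourier expansion of the top $\AND$ (Lemma~\ref{lem:fourier}), obtaining $\SUM_{2^{d'}}\circ\MOD2\circ\AND$ in which the inner $\AND$ fan-in stays at the bottom-polynomial degree $O(t^{1-\delta}\log s)$ and only the already-subexponential $\SUM$ gains a factor $2^{d'}$. With that substitution the degree constraint becomes $(1+\beta)(1-\delta)<1$, i.e.\ $\beta<\delta/(1-\delta)$, which together with $\alpha<\beta$ (from the $\SUM$/size constraint) gives exactly the advertised range. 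So your overall route is sound, but the ``multiply out'' move in the final collapse must be replaced by the Fourier-expansion trick, and the expectation/averaging step must be done per-coordinate as in the paper rather than at the level of $P$.
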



\begin{proof}
First, we apply Lemma~\ref{lem:avishay} to $F$ for some parameter $t$ to be defined later. We obtain a read-once formula $F'$ of size $k=O(s/t)$, and $k$ formulas $T_1,\ldots,T_k$ each of size $\leq t$. 

Let $p$ be an approximate polynomial (over the reals) for $F'$ of degree $d=O(\sqrt{k})$ with error $1/10$, guaranteed by Theorem~\ref{thm:reichardt}. Applying Theorem~\ref{thm:sherstov}, we get a $1/10$-robust polynomial $p'$ for $p$ of degree $d'=O(\sqrt{k})$. 

By the hypothesis of the theorem, we know that each $T_i$ has a probabilistic polynomial of degree $O(t^{1-\delta})$ with error $\eps=1/3$. For each $T_i$, draw $O(\log{s})$ independent copies of this probabilistic polynomial, and take their majority vote with an $O(\log{s})$-degree polynomial. For an appropriate leading constant in the big-O, we can obtain a probabilistic polynomial for $T_i$ of degree $O(t^{1-\delta}\cdot \log{s})$ with error $1/(10s)$.

Let ${\cal{D}}_1,\ldots,{\cal{D}}_k$ be probabilistic polynomials of degree $D=O(t^{1-\delta}\cdot \log{s})$ with error $\eps=1/(10s)$ for the formulas $T_1,\ldots, T_k$. The error bound $\eps=1/(10s)$ guarantees that for every $x\in\Ftwo^n$, \emph{all} $k$ polynomials compute the correct value with probability at least $9/10$.

Now for every $T_i$, we compute the average $A_i$ (over the reals) of $O(n)$ independent samples from ${\cal{D}}_i$. By a Chernoff bound and union bound,  each $A_i$ is within $\pm1/10$ of the correct 0-1 value for $T_i$, over all $2^n$ inputs $x$, with probability of error $1/\exp(n)$. By the properties of robust polynomials, $p'$ fed the sums $A_i$ will still output the correct value (within $\pm1/10$) for \emph{all} inputs $x\in\Ftwo^n$, for some choice of samples.

Therefore $F$ can be computed by a
\[\SUM_{n^{d'}}\circ\PRODUCT_{d'}\circ \SUM_{O(n)}\circ \MOD2\circ \AND_{D}.\] 
Applying distributivity to the PRODUCT of SUMs, we get 
\[\SUM_{n^{d'}}\circ\SUM_{n^{O(d')}}\circ \PRODUCT_{d'}\circ \MOD2\circ \AND_{D}.\] 
Noting the PRODUCTs now take 0/1 inputs, we can replace them with ANDs: 
\[\SUM_{n^{d'}}\circ\SUM_{n^{O(d')}}\circ \AND_{d'}\circ \MOD2\circ \AND_{D}.\]
Taking the Fourier expansion of the AND function (see, e.g., \eqref{eq:and fourier} in Lemma~\ref{lem:fourier}), we can replace each AND gate with a SUM of $2^{d'}$ MOD2s of fan-in $\leq d'$:
\[\SUM_{n^{d'}}\circ\SUM_{n^{O(d')}}\circ \SUM_{2^{d'}}\circ \MOD2\circ \AND_{D}.\]
Merging the SUMs, our final expression has the form:
\[\SUM_{n^{O(d')}}\circ \MOD2\circ \AND_{D}.\]
Finally, we want to choose a value of $t$ so that the fan-in of the SUM is subexponential, and the fan-ins of the AND's are sublinear (which will also imply that the fan-in of the MOD2's are sub-exponential). Let $t=n^{1+\beta}$, where $\beta$ is an arbitrary number between $\alpha<\beta<\delta/(1-\delta)$. Note that
\[d'=O(\sqrt{k})=O(\sqrt{s/t})
=O(n^{1-\frac{\beta-\alpha}{2}})=O(n^{1-\gamma})
\]
for every $0<\gamma<\frac{\beta-\alpha}{2}$.
Also, observe that 
\[
D=O(t^{1-\delta}\cdot\log{s}) = O(n^{1-(1-\delta)(\delta/(1-\delta)-\beta)}\log{n})=O(n^{1-\gamma})
\]
for every $0<\gamma<(1-\delta)(\delta/(1-\delta)-\beta)$.

From the upper bounds on $d'$ and $D$, we have that $F$ can be computed by \[\SUM_{2^{n^{1-\gamma}}} \circ \MOD2_{2^{n^{1-\gamma}}} \circ \AND_{n^{1-\gamma}} \;
\]
for some $\gamma>0$.
\end{proof}

The above formula depth reduction shows that, if there are more efficient probabilistic polynomials for DeMorgan formulas (and we have no reason to doubt this), then super-cubic formulas have interesting representations as approximate sums of sub-exponentially many sub-linear degree $\F_2$-polynomials. Recent work~\cite{Williams18,Chen-Williams19} can already be applied to prove interesting lower bounds against approximate sums of $2^{n^{\alpha}}$ $\F_2$-polynomials of degree $n^{\beta}$, where $\alpha+\beta < 1$. The remaining challenge will be to prove lower bounds when $\max\{\alpha,\beta\} < 1$.

\section{Circuit Depth Reductions}\label{sec:reductions}
In this section, we present new depth reductions for circuits with unrestricted depth. 

\subsection{Linear Circuits}
We start by considering \emph{linear} circuits, i.e., circuits consisting of $\oplus$ gates only. For technical reasons, we assume that there are $n+1$ input gates in a~linear circuit: $x_1, \dotsc, x_n$ as well as the constant~0. For a~matrix $M \in \{0,1\}^{m \times n}$, we say that a~linear circuit $\mathcal{C}$ with $m$ outputs computes the linear transformation $M$ 
 if the $i$-th output of $\mathcal{C}(x)$ equals the $i$-th row of $Mx$ for all $x \in \{0,1\}^n$, treating $\mathcal{C}(x)$ as the vector of output values. 
 We say that a~linear circuit~$\mathcal{C}$ computing $M$ is \emph{optimal} if no circuit of smaller size computes~$M$.

The main result of this subsection asserts that matrices computable by small linear circuits are not too rigid. The contrapositive says: to get an improved lower bound on the size of linear circuits, it suffices to construct a~matrix with good rigidity parameters. Below, we restate the corresponding theorem formally and then prove it.

\lineartransformation*


\begin{proof}
Let $\mathcal{C}$ be an optimal~circuit of size~$s$ computing~$M$.
If $s < 16$ or the depth of~$\mathcal{C}$ is at most~4, then
each output depends on at most~$16$ variables. Hence $M$~is 16-sparse and the theorem statement holds. Consider this as the base case of an induction on~$s$. 

For the induction step, we ``normalize'' $\mathcal{C}$. Namely, we show how to express $M$ as the (modulo 2) sum of two $\F_2$-matrices $A$ and $B$, where $A$ is $16$-sparse (each row has $\leq 16$ ones) and $B$ has rank at most $\lfloor s/4 \rfloor$. Note that if $\mathcal{C}$~has an~output gate~$H$ of depth at most~4, 
then $H$~depends on at most~$2^4=16$ inputs. Thus the corresponding row $r_H$ of~$M$ has at most~16 ones. Consider the $(m-1)\times n$ matrix~$M_{-H}$ obtained by removing $r_H$ from $M$. We claim that $\R_{M_{-H}}(\floor{s/4}) \leq 16$ implies $\R_{M}(\floor{s/4}) \leq 16$. Indeed, suppose $M_{-H}=A_{-H} \oplus B_{-H}$ where $A_{-H}$ is 16-sparse and $\operatorname{rank}(B_{-H}) \le \lfloor s/4 \rfloor$. To get matrices $A$ and $B$ for~$M$, we simply add the row $r_H$ to $A_{-H}$ and a corresponding all-zero row to~$B_{-H}$. Clearly, the resulting matrix~$A$ is 16-sparse and the rank of the resulting matrix~$B$ does not change. Thus, in the following, we assume WLOG that $\mathcal{C}$~has no output gates of depth at~most~$4$. Our crucial step is the following claim.

\begin{claim}
\label{claim:lincaseanalysis}
Let $\mathcal{C}$ be an optimal linear circuit computing $M \in \{0,1\}^{m \times n}$ such that $s(\mathcal{C}) \ge 16$, and no output gate of~$\mathcal{C}$ has depth smaller than~5. Then there is a gate $G$ in $\mathcal{C}$ and a~linear circuit~$\mathcal{C}'$ computing a matrix $M' \in \{0,1\}^{m \times n}$ with the properties:
\begin{enumerate}
\item $s(\mathcal{C}') \le s(\mathcal{C})-4$, and
\item for every $x \in \{0,1\}^n$, if $G(x)=0$ then $\mathcal{C}(x)=\mathcal{C'}(x)$.
\end{enumerate}
\end{claim}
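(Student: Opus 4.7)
The plan is to locate a gate $G$ at small depth and an accompanying circuit $\mathcal{C}'$ so that, on the half-space $\{x : G(x) = 0\}$, $\mathcal{C}'$ computes the same outputs as $\mathcal{C}$ but uses at least $4$ fewer gates. The driving observation is the linearity of $\oplus$: if $G$ evaluates to $0$, then every successor $H$ of $G$, which computes $H = G \oplus u$ for some wire $u$, degenerates to $u$, and so $H$ can be replaced by a direct wire from $u$ to $H$'s successors. In the opposite direction, any predecessor of $G$ whose only role in $\mathcal{C}$ is to help compute $G$ becomes superfluous once $G$ is set to $0$.

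Concretely, define $\text{Tree}(G)$ to consist of $G$ together with every predecessor of $G$ whose only paths to an output of $\mathcal{C}$ pass through $G$; these are exactly the gates that become unused if $G$ is removed. Setting $G \equiv 0$ then lets us delete every gate in $\text{Tree}(G)$ and simultaneously eliminate each of the $\text{out-deg}(G)$ successors of $G$ by the pass-through argument. The resulting circuit $\mathcal{C}'$ has $|\text{Tree}(G)| + \text{out-deg}(G)$ fewer gates than $\mathcal{C}$ and agrees with $\mathcal{C}$ on every input with $G(x) = 0$. Hence it suffices to exhibit a gate $G$ satisfying
\[
|\text{Tree}(G)| + \text{out-deg}(G) \;\geq\; 4.
\]

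To locate such a $G$, I would do a case analysis on the top four layers of $\mathcal{C}$. The assumption that $s(\mathcal{C}) \ge 16$ and that no output lies at depth $\le 4$ guarantees that every gate at depth $\le 4$ is internal, and therefore has out-degree at least $1$. If some top gate has out-degree $\ge 3$, taking $G$ to be that gate already gives savings of $1+3=4$. Otherwise every top gate has out-degree $\le 2$; if some depth-$2$ gate $H$ has both inputs being single-use top gates, then $|\text{Tree}(H)| = 3$, and combined with $\text{out-deg}(H) \ge 1$ we again get savings of $\ge 4$. If neither case triggers, every depth-$2$ gate has at least one multi-use or variable input, and we move to depths $3$ and $4$; the structural constraints inherited from the earlier failures bound the number of available ``branching points,'' forcing some depth-$3$ or depth-$4$ gate to have a single-use formula feeding into it of size $\ge 3$ (or out-degree large enough), again yielding savings $\ge 4$.

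The main obstacle is making the last step of the case analysis airtight: the scenario in which top and depth-$2$ gates are simultaneously small in both subtree size and out-degree requires careful bookkeeping of how many depth-$\le 4$ gates can coexist subject to the restrictions on multi-use predecessors, and one may need to invoke optimality of $\mathcal{C}$ to exclude degenerate situations (e.g., a gate computing a constant, $0$, or duplicating another gate). Once the desired $G$ is found, constructing $\mathcal{C}'$ by the deletion-and-rewiring procedure described above and verifying both conclusions of the claim reduces to tracing the identities $G \equiv 0$ and $0 \oplus u = u$ through $\mathcal{C}$.
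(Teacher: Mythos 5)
Your overall plan — identify a gate $G$ of small depth, set $G\equiv 0$, and do a case analysis on the first few layers to show four gates become removable — mirrors the structure of the paper's proof. However, the savings formula you rely on, $|\text{Tree}(G)| + \operatorname{out-deg}(G) \geq 4$, is not strong enough, and the case analysis cannot be completed from it alone. Consider a circuit in which every gate at depth $\leq 4$ has out-degree exactly $2$ and every predecessor of such a gate has another path to an output. Then $\text{Tree}(G)=\{G\}$ for every candidate $G$, and $|\text{Tree}(G)|+\operatorname{out-deg}(G) = 1+2 = 3 < 4$, so no $G$ certifies four removals under your accounting. Your first two explicit subcases (a top gate of out-degree $\geq 3$, or a depth-$2$ gate with two single-use top-gate inputs) do not trigger, and the hand-waved last step (``forcing some depth-$3$ or depth-$4$ gate to have a single-use formula feeding into it of size $\geq 3$'') has no leverage here since there are \emph{no} single-use gates at all.

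The missing idea is a consequence of linearity that you state in general terms but do not exploit at the predecessors. If $G = B\oplus C$ and we restrict to inputs with $G(x)=0$, then $B$ and $C$ compute the \emph{same} linear function on that half-space. Since $G$ has depth $\geq 2$, at least one of $B,C$ is an internal gate, say $C$; one can then delete $C$ entirely and reroute each wire $C\to H$ to $B\to H$, and this works \emph{regardless of the out-degree of $C$}. Together with removing $G$ itself and pass-through-eliminating two of its successors, this yields $4$ removed gates whenever some gate of depth $2$--$4$ has out-degree $\geq 2$ (the paper's Case~1). The complementary case, where all gates at depth $2$--$4$ have out-degree $1$, is exactly the regime where your $\text{Tree}$ bookkeeping does give $|\text{Tree}(G)|\geq 3$ by following a path $x_i\to D\to C\to B\to G$, and is the paper's Case~2. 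Without the predecessor-merge observation, though, your two-case split is incomplete and the claim does not follow.
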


For now, suppose the claim is proved. Consider the circuit~$\mathcal{C}'$, gate $G$ in $\mathcal{C}$, and matrix~$M'$ provided by Claim~\ref{claim:lincaseanalysis}. Let $g \in \{0,1\}^{1 \times n}$ be the~characteristic vector of the linear function computed by $G$, so that $G(x)=gx$. By the claim, $gx=0$ implies $(M \oplus M')x=0$. Hence $(M \oplus M')$ is either the zero matrix, or it defines the same linear subspace as~$g$: $M \oplus M'=tg$ for a~vector $t \in \{0,1\}^{m \times 1}$.

By the induction hypothesis, $M'=A' \oplus B'$ where $A'$ is 16-sparse, and $\operatorname{rank}(B') \le \floor{\frac{s-4}{4}} = \floor{\frac s4}-1$. Thus, $M=A' \oplus B$, where the matrix $B=B' \oplus tg$ has rank at most $\floor{s/4}$ by subadditivity of the rank function. 
\end{proof}

We now turn to proving the remaining claim.

\begin{proof}[Proof of Claim~\ref{claim:lincaseanalysis}]
\leavevmode
\begin{description}
\item[Case 1:] {\bf There is a~gate $G$ in~{\C} of depth at least~2 and at most~4, and has out-degree at least~$2$.} Let the predecessors of $G$ be~$B$ and~$C$, and call two of its successors~$D$ and~$E$, see Figure~\ref{fig:transformationsclaim}
(in this and the following figures, we write the out-degrees of some of the gates near them).
The circuit~$\mathcal{C}'$ is obtained from~$\mathcal{C}$ by ``assigning'' the output of $G$ to be $0$. Note that $B(x)=C(x)$ for all $x \in \{0,1\}^n$ where $G(x)=0$. At least one of~$B$ and~$C$ must be an internal gate (otherwise $G$~would have depth~1), let it be~$C$. Since~$C$ computes the same function as~$B$, it may be removed from~$\mathcal{C}'$: we remove it, and replace every wire of the form $C \to H$ by a~new wire $B \to H$. Note that neither~$G$ nor~$C$ is an output gate. Now, we show that both~$D$ and~$E$ can also be removed. Let us focus on the gate~$D$ (for~$E$ it is shown similarly) and call its other predecessor~$F$. Since~$G=0$, the gate~$D$ computes the same function as~$F$. This means that one may remove~$D$: we remove it and replace every wire~$D \to H$ by a~wire~$F \to H$. If~$D$~happens to be an~output gate, we move the corresponding output label from~$D$ to~$F$.

\begin{figure}[ht]
\label{fig:transformationsclaim}
\begin{mypic}
\tikzstyle{d}=[below,draw=none,rectangle,inner sep=0mm,text width=55mm]

\begin{scope}
\node[label=180:$B$] (b) at (0,5.5) {};
\node[label=0:$C$] (c) at (1,5.5) {$\oplus$};
\node[label=180:$G$] (g) at (0.5,4.75) {$\oplus$};
\node[label=180:$D$] (d) at (0,4) {$\oplus$};
\node[label=0:$E$] (e) at (1,4) {$\oplus$};
\node[label=180:$F$] (f) at (-0.5,4.75) {};
\foreach \f/\t in {b/g, c/g, g/d, g/e, f/d} \draw[->] (\f) -- (\t);
\node[d] at (0.5,1.75) {Case~1: assuming $G=0$, the gate~$G$ is removed, $B$~is replaced by~$C$, and $D$~and~$E$ are replaced by their other predecessors.};
\end{scope}

\begin{scope}[xshift=50mm]
\node (x) at (0.5,6.25) {$x_i$};
\node[label=180:$D$] (d) at (0.5,5.5) {$\oplus$};
\node[label=60:1,label=180:$C$] (c) at (0.5,4.75) {$\oplus$};
\node[label=60:1,label=180:$B$] (b) at (0.5,4) {$\oplus$};
\node[label=180:$G$] (g) at (0.5,3.25) {$\oplus$};
\node[label=180:$E$] (e) at (0.5,2.5) {$\oplus$};
\node[label=0:$F$] (f) at (1,3.25) {$\oplus$};
\foreach \f/\t in {x/d, d/c, c/b, b/g, g/e, f/e} \draw[->] (\f) -- (\t);
\node[d] at (0.5,1.75) {Case~2: assuming $G=0$, the gates $B$, $C$, and $G$ are removed whereas $E$ is replaced by~$F$.};
\end{scope}

\end{mypic}
\caption{Cases in the proof of Claim~\ref{claim:lincaseanalysis}.}
\end{figure}

\item[Case 2:] {\bf All gates of depth at least~2 and at most~4 have out-degree exactly~1 in~\C.} Take a~gate~$G$ of depth~$4$ and trace back its longest path to an input: $x_i \to D \to C \to B \to G$. Let also $E$~be the successor of~$G$ (which exists because $C$ has depth at least $5$). By assumption, gates~$B$ and~$C$ have out-degree~1. This means that in~$\mathcal{C}$ they are only used for computing the gate~$G$. This, in turn, means that assuming $G=0$, we can remove~$G$,~$B$, and~$C$ (note none of them is an output). Finally, the gate~$E$ can be replaced by the other input $F$ of~$E$ (note $F \notin \{B,C,G\}$, since $\mathcal{C}$ is optimal).
\end{description}
This completes the proof.\end{proof}

\begin{remark}
Extending the same ideas, one can show
that any linear circuit $\C$ of size~$s$ can be computed by 
an~\dm{2^{\ceil{\frac s4}}}{s\cdot 2^{14}}{16} circuit. For this, one considers two optimal circuits~$\C_0$ and~$\C_1$ resulting from~$\C$ by assuming~$G=0$ and~$G=1$, respectively. As shown in the proof, both $\C_0$ and $\C_1$ have size at most $s-4$. One then proceeds by induction. We illustrate this approach in full detail in the next subsection. 
\end{remark}

\begin{remark}
The proof of Theorem~\ref{thm:linear} gives a~decomposition $M=A \oplus B=A\oplus (C\cdot D)$, where $A\in\F^{m\times n}$ is $16$-sparse, $C\in\F^{m\times s/4}$ is composed of 
vectors~$t$, and $D\in\F^{s/4 \times n}$ is composed of vectors~$g$. Since the chosen gate~$G$ always has depth at most four, the vector $g$ is $16$-sparse. Thus, we in fact have a decomposition $M=A\oplus(C\cdot D)$, where both $A$~and~$D$ are $16$-sparse. In particular, the row-space of $M$~is spanned by the union of row-spaces of~$A$ and~$D$. This implies that the row-space of $M$ can be spanned by at most $(m+\frac s4)$ $16$-sparse vectors. The corresponding matrix property is called \emph{outer dimension}, and it is studied in~\cite{PP06,Lokam09}. While the current lower bounds on the outer dimension of explicit matrices do not lead to new circuit lower bounds, it would be interesting to study their applications in this context.
\end{remark}

\subsection{General Boolean Circuits}
In this section, we study the following natural question: given a~Boolean circuit\footnote{In this section we consider functions with one output, but these results can be trivially generalized to the multi-output case.} and given an integer $k \geq 2$, what is the smallest \dz{k} circuit computing the same function? 
To this end, we introduce the following notation. For an integer $k \ge 2$, we define $\alpha(k)$ as the infimum of all values~$\alpha$ such that any circuit of size~$s$ can be rewritten as a~$\dt{2^{\alpha s}}{k}$ circuit. 

\newcommand{\df}[3]{\ensuremath{\text{OR}_{#1} \circ \text{AND}_{#2} \circ C(#3)}}

For proving upper bounds on $\alpha(k)$ it will be convenient to consider the following class of circuits. Let \df{p}{q}{r} be a~class of circuits with an output OR that is fed by at most~$p$ AND's of at most~$q$ circuits of size at most~$r$.

\begin{theorem}\label{thm:transformations}
Every~circuit of size~$s$ can be computed as:
\begin{enumerate}
\item an \df{2^{\ceil{\frac s2}}}{\ceil{\frac s2}}{1} circuit;
\item an \df{2^{\ceil{\frac s{3.9}}}}{\ceil{\frac s3}}{15} circuit.
\end{enumerate}
\end{theorem}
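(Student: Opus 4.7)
The plan is to prove both parts by induction on the circuit size~$s$, using a common branching recursion. For a carefully chosen internal gate~$G$ of~$\C$, I would use the decomposition
\[
\C(x)\;\equiv\;\bigl(\neg G(x)\wedge \C|_{G=0}(x)\bigr)\;\vee\;\bigl(G(x)\wedge \C|_{G=1}(x)\bigr),
\]
where $\C|_{G=v}$ is the circuit obtained by hard-wiring $G$ to~$v$ and propagating the resulting simplifications. Applied recursively this produces an outer OR, one leaf per recursion branch, of ANDs of factors of the form ``$G=v$''; each such factor is computed by a circuit whose size equals the \emph{private subformula} of~$G$---that is, the tree of ancestors of~$G$ all of which have out-degree~$1$. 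Choosing $G$'s private subformula to have size at most~$r$ then produces exactly a \df{p}{q}{r} circuit in which the AND-width is the recursion depth and the OR-size is the number of leaves.

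For Part~(1) I would always branch on a \emph{top} gate, i.e., a gate fed by two variables; its private subformula consists of $G$ alone and has size~$1$, so each ``$G=v$'' factor is a size-$1$ circuit. Fixing $G$ to a constant removes $G$ and forces every successor of~$G$ to simplify to a function of its other input, which can then be rewired away. Hence each branching step removes at least $1+\operatorname{outdeg}(G)\ge 2$ gates whenever $G$ has any successor; a top gate with no successor is already an output of~$\C$ and the recursion bottoms out. This yields recursion depth $\le \lceil s/2\rceil$ and $\le 2^{\lceil s/2\rceil}$ leaves, matching the claimed \df{2^{\ceil{s/2}}}{\ceil{s/2}}{1} bound.

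For Part~(2) I would pick $G$ to be the root of a private subformula whose size lies in $[2,15]$ and is as large as possible. Such a gate exists whenever~$\C$ has non-trivial structure: the topologically first gate of out-degree~$\ge 2$ roots some private formula (its earlier ancestors all have out-degree~$1$ by definition of ``first''), and we descend inside it until the size falls in the range. Each branching step then removes the entire chosen subformula together with at least one successor of~$G$ that simplifies, for at least $t+1\ge 3$ gates removed---or $t+\operatorname{outdeg}(G)\ge 4$ when $G$ is chosen at the very top of its cluster with higher out-degree. This already gives the AND-width bound~$\lceil s/3\rceil$. The stronger OR-bound $2^{\lceil s/3.9\rceil}$ requires amortising over the recursion: the ``meagre'' configurations in which exactly~$3$ gates are removed (the $t=2$ case with out-degree~$1$, or a top-gate case with $\operatorname{outdeg}(G)=2$) must be paid for by later ``fat'' steps that remove~$\ge 4$ gates, giving an average of at least~$3.9$ gates removed per branching.

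The hard part will be setting up this amortisation. One must rule out long chains of meagre steps, and this requires showing that the local structure forcing a meagre step exposes neighbouring gates whose private subformulas are larger or whose out-degrees are higher, so that the very next choice of~$G$ is fat. I expect a potential function that credits fat steps and debits meagre ones, combined with a topological case analysis in the spirit of the proof of Claim~\ref{claim:lincaseanalysis} for linear circuits, to carry this through. Once the amortisation is in place, passing from the \df{p}{q}{r} form to the \du\ form of Theorem~\ref{thm:maintransformation} is routine: each size-$\le 15$ bottom circuit expands to a $16$-CNF with at most~$2^{15}$ clauses, and multiplying by the $\lceil s/3\rceil$ AND-width recovers the $2^{14}\cdot s$ clause bound stated there.
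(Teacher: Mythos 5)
Part~(1) of your proposal essentially matches the paper's argument: branch on a top gate $A$ fed by two variables; $A$ together with at least one successor vanishes in both restrictions, giving $\le 2$ gates removed per step and hence $2^{\lceil s/2\rceil}$ OR fan-in.

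Part~(2) is where the gap lies. You propose to pay for ``meagre'' branchings (only $3$ gates removed) with later ``fat'' ones, via a potential function amortising \emph{across} levels of the recursion tree, and you explicitly defer the construction of that potential and the structural lemma ruling out long meagre chains as ``the hard part.'' That deferred piece \emph{is} the theorem; without it your argument only yields the $2^{\lceil s/3\rceil}$ bound. The paper avoids cross-step amortisation altogether. It takes $A$ to be a top gate at maximum distance to the output, follows a longest path $A\to B\to C\to\dotsb$, and runs a short case analysis on the out-degrees of $B,C$ \emph{and on the functional type of $B$} (XOR-type vs.\ AND-type). In every case but one, both restrictions $\C_0,\C_1$ lose at least $4$ gates. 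In the one exceptional case ($B$ of AND type with $\out(A)\ge 2$ and $\out(B)\ge 2$), the branch setting $B$ to the value that forces its AND to fire also forces $A$ (and $D$) to constants, so that restriction loses $\ge 5$ gates while the other loses only~$3$. The constant $3.9$ then comes out of the \emph{single-step} asymmetric recurrence
\[
p_0+p_1 \;\le\; 2^{\lceil (s-3)/3.9\rceil}+2^{\lceil (s-5)/3.9\rceil}\;<\;2^{\lceil s/3.9\rceil},
\]
which holds because $2^{-3/3.9}+2^{-5/3.9}<1$. No accounting over the recursion tree is needed, so the potential-function machinery you are anticipating never has to be built.

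Two secondary issues with the setup. First, you claim the ``$G=v$'' factor is computed by a circuit whose size equals $G$'s private subformula; but that only holds if $G$ is the topologically first out-degree-$\ge 2$ gate (so that all of $G$'s ancestors have out-degree~$1$), and if you then ``descend inside'' to a subgate, the descent is fine but you need to handle the case that no such $G$ exists because $\C$ is already a tree. Second, your plan to use private-subformula size and out-degree alone cannot recover the paper's asymmetry: the exceptional $(3,5)$ split is triggered by the \emph{Boolean operation} at $B$ (an AND-type gate lets one of the two constant values for $B$ propagate constants upstream to $A$ and $D$), not by graph-theoretic quantities, so a purely degree-based amortisation would need a different source of savings and it is not obvious one exists.
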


Note that any circuit of size~$r$ depends on at most $r+1$ variables, and hence can be written as an $(r+1)$-CNF with at most~$2^{r}$ clauses. Therefore every $\df{p}{q}{r}$ circuit can be easily converted into a~\dm{p}{q2^{r}}{r+1} circuit. 
Thorem~\ref{thm:maintransformation}, which we restate below, is then an immediate corollary of Theorem~\ref{thm:transformations}. In turn, it implies that $\alpha(2) \le \frac 12$ and $\alpha(16) \le \frac 1{3.9}$.

\maintransformation*

\begin{proof}[Proof of Theorem~\ref{thm:transformations}]
Both parts are proven in a~similar fashion. We proceed by induction on~$s$. The base case is when $s$~is small. We then just have an~\df{1}{1}{s} circuit.

For the induction step we take a~gate~$G$ of~$\mathcal{C}$ and consider two circuits~$\mathcal{C}_0$ and~$\mathcal{C}_1$ where $\mathcal{C}_i$~computes the same as $\mathcal{C}$ on all inputs $\{x \in \{0,1\}^n \colon G(x) = i\}$. We may assume both $\mathcal{C}_i$'s are minimal size among all such circuits. Since $\mathcal{C}_i$~can be obtained from~$\mathcal{C}$ by removing the gate~$G$ (as it computes the constant~$i$ on the corresponding subset of the Boolean hypercube), we conclude that $s(\mathcal{C}_i) < s$. This allows us to proceed by induction. Assume that by the induction hypothesis $\mathcal{C}_i$ is guaranteed to be expressible as an
$\df{p_i}{q_i}{r_i}$ circuit. We use the following identity to convert~$\mathcal{C}$ into the required circuit:
\begin{equation}
\label{eq:ind}
\mathcal{C}(x) \equiv ([G(x)=0] \land \mathcal{C}_0(x)) \lor ([G(x)=1] \land \mathcal{C}_1(x)) \, .
\end{equation}
Assume that the subcircuit of~$\mathcal{C}$ computing the gate~$G$ has at most~$t$ gates. We claim that $[G(x)=i] \land \mathcal{C}_i$ can be written as an~$\df{p_i}{q_i+1}{\max\{r_i, t\}}$ circuit. For this, we just feed a~new circuit computing~$G$ to every AND gate. Plugging this into~\eqref{eq:ind}, gives an
\begin{equation}
\label{eq:ind2}
\df{p_0+p_1}{\max\{q_0,q_1\}+1}{\max\{t, r_0, r_1\}}
\end{equation}
circuit for computing $\mathcal{C}$.

Below, we provide details specific to each of the two items from the theorem statement. In particular, we estimate the parameters $p_i$'s, $q_i$'s, $r_i$'s, and~$t$ and plug them into~\eqref{eq:ind2}.

\begin{enumerate}
\item The base case is $s=1$. Then $\mathcal{C}$ consists of a~single gate and can be expressed as an~\df{1}{1}{1} circuit.
For the induction step, assume that $s \ge 2$ and take a~gate~$A$ that depends on two variables. Let $G=A$, hence $t=1$. The gate~$A$ must have at least one successor (otherwise $\mathcal{C}$ can be replaced by a~circuit with smaller than $s$~gates). Clearly, $A$~and its successors are not needed in $\mathcal{C}_i$'s. Hence, by the induction hypothesis $p_i \le 2^{\frac{s-2}{2}+1}$, $q_i \le \frac{s-2}{2}+1$, $r_i \le 1$. Plugging this into~\eqref{eq:ind2} gives the desired result.

\item Take a~gate~$A$ that is fed by two variables $x$~and~$z$ and has the maximum distance to an output. If its distance to output is at most~$4$, then $s(\mathcal{C}) \le 15$ and we just rewrite it as an~\df{1}{1}{15} circuit. This is the base case. Assume now that the distance from~$A$ to the output gate is at least~5. In the analysis below, we always ``follow'' the longest path from~$A$ to the output. This allows us to conclude that any such path is long enough and hence each gate considered has positive out-degree (i.e., is not an output). Moreover, each gate on this path cannot depend on too many variables. Let $B$~be a~successor of~$A$ on the longest path to the output.

In the five cases below, we show that we can always find a~gate~$G$ that $s(G) \le 15$ and both $s(\mathcal{C}_0)$ and $s(\mathcal{C}_1)$ are small enough. In particular, 
$s(\mathcal{C}_0), s(\mathcal{C}_1) \le s-4$ works for us: 
$p_0+p_1 \le 2\cdot 2^{\ceil{\frac{s-4}{3.9}}} < 2^{\ceil{\frac{s}{3.9}}}$, $\max\{q_0,q_1\}+1 \le \ceil{\frac{s-4}{3}}+1 < \ceil{\frac s3}$. 

See Figure~\ref{fig:transformations} for an illustration of the five cases. For a~gate~$G$, by $\out(G)$ we denote the out-degree of~$G$.

\begin{figure}[ht]
\label{fig:transformations}
\begin{mypic}
\tikzstyle{d}=[below,draw=none,rectangle,inner sep=0mm,text width=30mm]

\begin{scope}
\node (x) at (0,6) {$x$};
\node (y) at (1,6) {$z$};
\node[label=180:$A$] (a) at (0.5,5.5) {};
\node[label=60:1,label=180:$B$] (b) at (0.5,4.75) {};
\node[label=60:1,label=180:$C$] (c) at (0.5,4) {};
\node[label=180:$E$] (e) at (0.5,3.25) {};
\foreach \f/\t in {x/a, y/a, a/b, b/c, c/e} \draw[->] (\f) -- (\t);
\node[d] at (0.5,2.75) {Case 1.1: when $E$ is constant, one removes~$B$, $C$, $E$, and successors of~$E$.};
\end{scope}

\begin{scope}[xshift=24mm]
\node (x) at (0,6) {$x$};
\node (y) at (1,6) {$z$};
\node[label=180:$A$] (a) at (0.5,5.5) {};
\node[label=60:1,label=180:$B$] (b) at (0.5,4.75) {};
\node[label=60:$2^+$,label=180:$C$] (c) at (0.5,4) {};
\node (e) at (0,3.25) {};
\node (f) at  (1,3.25) {};
\foreach \f/\t in {x/a, y/a, a/b, b/c, c/e, c/f} \draw[->] (\f) -- (\t);
\node[d] at (0.5,2.75) {Case 1.2: when $C$~is constant, one removes~$B$, $C$, and successors of~$C$.};
\end{scope}

\begin{scope}[xshift=48mm]
\node (x) at (0,6) {$x$};
\node (y) at (1,6) {$z$};
\node[label=180:$A$] (a) at (0.5,5.5) {};
\node[label=180:$D$] (d) at (-0.15,5.25) {};
\node[label=60:$2^+$,label=180:$B$] (b) at (0.5,4.75) {$\oplus$};
\node (e) at (0,4) {};
\node (f) at  (1,4) {};
\foreach \f/\t in {x/a, y/a, a/b, b/e, b/f, d/b} \draw[->] (\f) -- (\t);
\node[d] at (0.5,2.75) {Case 2.1: when $B$~is constant, one removes~$B$ and its successors, replace~$A$ by~$D \oplus c$.};
\end{scope}

\begin{scope}[xshift=72mm]
\node (x) at (0,6) {$x$};
\node (y) at (1,6) {$z$};
\node[label=180:$A$,label=90:$1$,] (a) at (0.5,5.5) {};
\node[label=60:$2^+$,label=180:$B$] (b) at (0.5,4.75) {$\land$};
\node (e) at (0,4) {};
\node (f) at  (1,4) {};
\foreach \f/\t in {x/a, y/a, a/b, b/e, b/f} \draw[->] (\f) -- (\t);
\node[d] at (0.5,2.75) {Case 2.2.1: when $B$~is constant, one removes~$B$ and its successors, and~$A$.};
\end{scope}

\begin{scope}[xshift=96mm]
\node (x) at (0,6) {$x$};
\node (y) at (1,6) {$z$};
\node[label=180:$A$,label=90:$2^+$,] (a) at (0.5,5.5) {};
\node[label=60:$2^+$,label=180:$B$] (b) at (0.5,4.75) {$\land$};
\node (e) at (0,4) {};
\node (f) at  (1,4) {};
\foreach \f/\t in {x/a, y/a, a/b, b/e, b/f} \draw[->] (\f) -- (\t);
\node[d] at (0.5,2.75) {Case 2.2.2: when~$B$ is constant, one removes~$B$ and its successors; moreover, $B=1$ it forces~$A$ to be a~constant and removes~$A$ and its successors.};
\end{scope}
\end{mypic}
\caption{Cases in the proof of the second part of Theorem~\ref{thm:transformations}.}
\end{figure}

\begin{description}
\item[Case 1:] $\out(B)=1$. Let $C$~be the successor of~$B$.
\begin{description}
\item[Case 1.1:] $\out(C)=1$. Let $E$~be the successor of~$C$. Let $G=E$. In $\mathcal{C}_i$'s, one removes~$B$, $C$ (as they were only needed to compute~$E$ that is now a~constant), $E$, and the successors of~$E$. 
\item[Case 1.2:] $\out(C)\ge 2$. Let $G=C$. In $\mathcal{C}_i$'s, one removes $B$, $C$, and the successors of~$C$.
\end{description}
\item[Case 2:] $\out(B) \ge 2$. Let $D$~be the other input of~$B$. It may be a~gate or an~input variable. If~$B$ computes a~constant Boolean binary operation or an operation that depends on~$A$ or~$D$ only, then $\C$ is not optimal. Otherwise, $B$~computes one of the following two types of functions (either linear or quadratic polynomial over~$\mathbb{F}_2$):
\begin{description}
\item[Case 2.1:] $B(A,D)=A \oplus D \oplus c$ where $c \in \{0,1\}$.  Let $G=B$. In $\mathcal{C}_i$'s, one immediately removes~$B$ and its successors. Also, in $\mathcal{C}_i$, $D \oplus A=i \oplus c$. Hence, $A$~may be replaced by~$D \oplus i \oplus c$.
\item[Case 2.2:] $B(A,D)=(A \oplus a) \cdot (D \oplus d) \oplus c$ where $a, d, c \in \{0,1\}$.
\begin{description}
\item[Case 2.2.1:] $\out(A)=1$. Let $G=C$. In $\mathcal{C}_i$'s, one removes~$B$, its successors, and~$A$.
\item[Case 2.2.2:] $\out(A) \ge 2$. Let $D$~be the other successor of~$B$. 
Let $G=B$. In $\mathcal{C}_i$'s, one removes~$B$ and its successors. Also,  $B=c \oplus 1$ forces $A=a \oplus 1$ and $D=d \oplus 1$. Hence, in $\mathcal{C}_{c \oplus 1}$ two additional gates are removed: $A$ and its successors (if a~successor of~$B$ happens to be a~successor of~$A$ also, then it is a~function on~$A$ and~$D$ and the circuit can be simplified, which contradicts its optimality). Hence, 
\(p_0+p_1 \le 2^{\ceil{\frac{s-3}{3.9}}}+2^{\ceil{\frac{s-5}{3.9}}} \,.\)
This is smaller than $2^{\ceil{\frac{s}{3.9}}}$ since $2^{-\frac{3}{3.9}}+2^{-\frac{5}{3.9}}<1$.
\end{description}
\end{description}
\end{description}

\end{enumerate}
This completes the proof.\end{proof}

\begin{remark}
It is not difficult to see that the output OR gate is a ``disjoint OR'', and can be replaced by a~SUM gate over the integers. In other words, for every $x \in \{0,1\}^n$, at most one subcircuit feeding into the OR gate may evaluate to~1. This holds because we always consider two mutually exclusive cases: $G=0$ or~$G=1$.
\end{remark}

\subsection{Properties of $\alpha(k)$}

We start by observing a~lower bound on $\alpha(k)$.

\begin{lemma}
For any integer $k \ge 2$, $\alpha(k) \ge 1/k$.
\end{lemma}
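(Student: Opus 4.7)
The plan is to exhibit an explicit function of linear circuit size that cannot be expressed as an OR of significantly fewer than $2^{n/k}$ $k$-CNFs; the natural candidate is $\oplus_n$, the parity of $n$ variables, which has circuit size exactly $n-1$. Suppose $\oplus_n$ is written as $\bigvee_{i=1}^{N} F_i$ where each $F_i$ is a $k$-CNF. Since each $F_i$ evaluates to $1$ only on parity-$1$ inputs, and every Hamming-neighbor of a parity-$1$ input is a parity-$0$ input, \emph{every} variable is critical for \emph{every} satisfying assignment of $F_i$: flipping any bit must falsify $F_i$. In particular, for each satisfying assignment $x$ and each variable $x_j$, there is a ``critical clause'' $C_j \in F_i$ containing $x_j$ such that all other literals of $C_j$ are falsified by $x$.

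Next, I would invoke the standard Satisfiability Coding Lemma argument of Paturi, Pudl\'ak, and Zane to bound $|F_i^{-1}(1)|$. Pick a uniformly random permutation $\pi$ of $[n]$ and process the variables in the order given by $\pi$. For each $x_j$, if in $\pi$ it occurs after all the other (at most $k-1$) variables of its critical clause $C_j$, then $x_j$'s value is forced by earlier bits. The probability that $x_j$ is last among the variables of $C_j$ is at least $1/k$, so in expectation at least $n/k$ variables are determined, and some permutation achieves this. Each satisfying assignment is then encoded by the remaining $\le n - n/k = n(1-1/k)$ bits, giving
\[
|F_i^{-1}(1)| \;\le\; 2^{n(1-1/k)}.
\]

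Since there are $2^{n-1}$ parity-$1$ inputs and each $F_i$ covers at most $2^{n(1-1/k)}$ of them, any OR of $k$-CNFs computing $\oplus_n$ must satisfy $N \ge 2^{n-1}/2^{n(1-1/k)} = 2^{n/k - 1}$. If $\oplus_n$ were expressible as a $\dt{2^{\alpha s}}{k}$ circuit with $s = n-1$, this would force $\alpha(n-1) \ge n/k - 1$, i.e.\ $\alpha \ge (n-k)/(k(n-1))$. Taking $n \to \infty$ this lower bound tends to $1/k$; hence for every $\alpha < 1/k$ there is a sufficiently large $n$ for which parity violates the bound, so $\alpha(k) \ge 1/k$.

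The only nonroutine step is the SCL-style counting bound on $|F_i^{-1}(1)|$, and the key observation that makes it go through is that parity's structure forces \emph{every} variable to be critical for every satisfying assignment of each $F_i$. Everything else is a clean counting and limiting argument.
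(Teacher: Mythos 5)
Your proof is correct and follows essentially the same route as the paper: both use parity $\oplus_n$ as the hard function, invoke the Paturi--Pudl\'ak--Zane bound of $2^{n(1-1/k)}$ on isolated satisfying assignments of a $k$-CNF (you re-derive it via the Satisfiability Coding Lemma; the paper cites it directly), and conclude by the same counting and $n\to\infty$ limit. The only difference is that you unpack the SCL argument instead of citing it, which slightly glosses over the averaging/Kraft step (``some permutation achieves this'' should really be an expected-length plus Jensen argument over a fixed permutation), but the substance is the paper's proof.
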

\begin{proof}
Let $\oplus_n$ denote the parity function of $n$ inputs. It has $2^{n-1}$ inputs where it is equal to~1 and all these inputs are isolated, that is, the Hamming distance between any pair of them is at least $2$. As proven by Paturi, Pudl{\'a}k, and Zane~\cite{PPZ97}, every $k$-CNF has at most $2^{n(1-1/k)}$ isolated satisfying assignments. This implies that $\oplus_n$~cannot be computed by an OR of fewer than $2^{n/k-1}$ $k$-CNFs. Since $s(\oplus_n)=n-1$, this implies that 
\[\alpha(k) \ge \frac{\frac nk - 1}{n-1} \, .\]
Since this must hold for arbitrary large~$n$, $\alpha(k) \ge 1/k$.
\end{proof}

Thus, we know the exact value of $\alpha(2)=\frac{1}{2}$. This immediately implies a circuit lower bound of $2n-o(n)$ for BCH codes. Indeed, it was shown in~\cite{PSZ97} that when the bottom fan-in is restricted to $k=2$, then BCH codes require depth-$3$ circuits of size $2^{n-o(n)}$. And, since $\alpha(2)=\frac 12$, they must have circuit complexity at least $2n-o(n)$.

One can use techniques from Theorem~\ref{thm:transformations} to prove an upper bound of $\alpha(3) \le \frac{\log_2 3}{4}$. Thus, we know that
\[\frac 13 \le \alpha(3) \le \frac{\log_2 3}{4} < 0.3963 \, .\]

We conjecture that the upper bound on $\alpha_3$ is tight. One way to prove this would be to find the $s_3^3$ complexity of the inner product function: $\operatorname{IP}(x_1,\dotsc,x_n)=x_1x_2 \oplus x_3x_4 \oplus \dotsb \oplus x_{n-1}x_n$. In particular, if the upper bound shown in the next lemma is tight, then $\alpha(3)=\frac{\log_2 3}{4}$.

\begin{lemma}
\leavevmode
\begin{enumerate}
\item $2^{\frac n4} \le s_3^2(\IP) \le 2^{\frac n2-o(n)}$.
\item $2^{\frac n6} \le s_3^3(\IP) \le 3^{\frac n4}$.
\end{enumerate}
\end{lemma}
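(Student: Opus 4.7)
The two items require different techniques; I treat them separately.

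\textbf{Part 1, upper bound.} Enumerate the $2^{n/2-1}$ odd-weight vectors $v\in\{0,1\}^{n/2}$ and, for each, construct the 2-CNF
\[F_v=\bigwedge_{i:\,v_i=1}(x_{2i-1}\wedge x_{2i})\;\wedge\;\bigwedge_{i:\,v_i=0}(\bar x_{2i-1}\vee\bar x_{2i}),\]
whose satisfying set consists of exactly those inputs with pair-products equal to~$v$. Because $\IP(x)=\bigoplus_i x_{2i-1}x_{2i}$, the sets $F_v^{-1}(1)$ over odd-weight $v$ partition $\IP^{-1}(1)$, so $\bigvee_v F_v$ computes $\IP$ with total size $O(n\cdot 2^{n/2})=2^{n/2+o(n)}$.

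\textbf{Part 1, lower bound.} The plan is to argue that any 2-CNF $F$ implying $\IP$ can witness at most $2^{o(n)}$ distinct pair-patterns $v(x)=(x_1x_2,\ldots,x_{n-1}x_n)$ among its satisfying inputs; combined with the $2^{n/2-1}$ odd-weight patterns that must all be covered, this gives the bound. I will use that 2-CNF solution sets are closed under coordinatewise majority (medians), by Schaefer's dichotomy theorem: if $F$ accepts inputs witnessing many distinct pair-patterns, then a suitable median of such inputs produces an accepting input whose pair-pattern has \emph{even} Hamming weight, so $\IP$ is $0$ there, contradicting $F\Rightarrow\IP$. Making this quantitative is the main obstacle. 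A weaker fall-back uses the restriction $x_{2i-1}=1$ for all $i$, which turns $\IP$ into parity on $n/2$ variables; PPZ for 2-CNFs then yields only $s_3^2(\IP)\ge 2^{n/4-1}$.

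\textbf{Part 2, upper bound.} Partition the variables into $n/4$ consecutive 4-blocks; the $j$-th block $(a,b,c,d)=(x_{4j-3},x_{4j-2},x_{4j-1},x_{4j})$ contributes $ab\oplus cd$ to $\IP$. Use three cases per block: Case~I, $ab=cd$ (contribution $0$), expressed by the 3-CNF $(\bar a\vee\bar b\vee c)\wedge(\bar a\vee\bar b\vee d)\wedge(\bar c\vee\bar d\vee a)\wedge(\bar c\vee\bar d\vee b)$; Case~II, $ab=1,\,cd=0$ (contribution $1$), expressed by $a\wedge b\wedge(\bar c\vee\bar d)$; Case~III, $ab=0,\,cd=1$ (contribution $1$), expressed by $(\bar a\vee\bar b)\wedge c\wedge d$. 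These three cases partition all $16$ block assignments. For each profile $(t_1,\ldots,t_{n/4})\in\{\mathrm{I},\mathrm{II},\mathrm{III}\}^{n/4}$ with an odd number of blocks in Cases~II or III, the conjunction of the per-block 3-CNFs is a single 3-CNF accepting exactly the $\IP$-satisfying inputs with that profile; there are $(3^{n/4}\pm 1)/2\le 3^{n/4}$ such valid profiles.

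\textbf{Part 2, lower bound.} Restrict $x_{2i-1}=1$ for every $i\in[n/2]$. Then $\IP$ becomes the parity $\bigoplus_{i=1}^{n/2}x_{2i}$ on $n/2$ variables, and each 3-CNF restricts to a 3-CNF on at most $n/2$ variables. By the PPZ Satisfiability Coding Lemma, a 3-CNF on $m$ variables has at most $2^{2m/3}$ isolated satisfying assignments; for parity on $m$ variables every satisfying assignment is isolated, and there are $2^{m-1}$ of them. Hence any OR of 3-CNFs computing parity on $m$ variables needs at least $2^{m/3-1}$ terms; setting $m=n/2$ yields $s_3^3(\IP)\ge 2^{n/6-1}$.
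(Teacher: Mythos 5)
Your Part~1 upper bound and your entire Part~2 argument are correct and essentially match the paper. Your Part~2 upper bound is framed a bit more cleanly than the paper's: you split each block's contribution directly into three exhaustive cases (I: $ab=cd$; II: $ab=1,cd=0$; III: $ab=0,cd=1$), each a single 3-CNF, and count odd-weight $\{\mathrm{II},\mathrm{III}\}$-profiles, rather than first fixing the set $S$ of blocks with $p_i=1$ and then branching on $x_{4i-3}$ to express $[p_i=1]$ as an OR of two 3-CNFs. Both routes give the same count $\sum_{|S|\,\text{odd}}2^{|S|}=\tfrac{3^{n/4}-(-1)^{n/4}}{2}\le 3^{n/4}$. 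The Part~2 lower bound via restriction to $\oplus_{n/2}$ and the PPZ isolated-solution bound is exactly the paper's argument.

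The genuine gap, which you yourself flag, is the Part~1 lower bound. The median-closure property of 2-SAT solution sets is real and suggestive, but you give no quantitative version, and your fallback restriction ($x_{2i-1}\gets 1$, then PPZ for 2-CNFs on $\oplus_{n/2}$) yields only $2^{n/4-1}$, which is far short of the claimed $2^{n/2-o(n)}$. The paper closes this by a different route: it cites that $\IP$ is a \emph{disperser for projections} of dimension $\tfrac n2+1$, i.e.\ it remains non-constant under any $\tfrac n2 -1$ substitutions of the form $x_i\gets x_j\oplus c$ (cohen2016complexity, Theorem~A.1), and then invokes the theorem of Paturi, Saks, and Zane~(PSZ97) that any such disperser requires $s_3^2\ge 2^{n-d-o(n)}$. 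The content of that theorem is that the satisfying set of a 2-CNF (after standard 2-SAT simplifications) collapses to an affine-like projection structure of small codimension; your median-closure intuition is morally pointing at the same phenomenon, but you would need to reprove the PSZ97 structural bound on 2-CNF solution sets to make it go through.
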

\begin{proof}
Note that by substituting every other input of ${\IP}$ by~1, one gets the parity function $\oplus_{\frac n2}$ on the remaining $n/2$ inputs.
Now both lower bounds follow from the corresponding lower bounds for the parity function: $s_3^2(\oplus_k) \ge 2^{\frac k2}$ and $s_3^3(\oplus_k) \ge 2^{\frac k3}$.
\begin{enumerate}
\item 
The first upper bound follows from the fact that $\IP(x_1, \dotsc, x_n)=1$ iff there is an odd number of ones among
\[p_1=x_1x_2, \, p_2=x_3x_4, \, \dotsc, p_{\frac n2}=x_{n-1}x_n \,.\]
Hence,
\[\IP(x_1, \dotsc, x_n) \equiv \bigvee_{S \subseteq [\frac n2] \colon |S| \bmod 2 = 1}\left(\bigwedge_{i \in S}[p_i=1] \land \bigwedge_{i \not \in S}[p_i=0]\right) \, .\]
It remains to note that each $[p_i=c]$ can be expressed as a~2-CNF because $p_i$ depends on two variables.

\item 
For the second upper bound, note that $\IP(x_1, \dotsc, x_n)=1$ iff there is an odd number of 1's among
\[p_1=x_1x_2\oplus x_3x_4, \, p_2=x_5x_6\oplus x_7x_8, \, \dotsc, p_{\frac n4}=x_{n-3}x_{n-2}\oplus x_{n-1}x_n \,.\]
To compute $\IP$ by a~depth~3 circuit, we go through all possible $2^{\frac n4 - 1}$ values of $p_1, \dotsc, p_{\frac n4}$ such that an odd number of them is equal to~1:
\begin{equation}
\label{eq:ipcnf}
\IP(x_1, \dotsc, x_n) \equiv \bigvee_{S \subseteq [\frac n4] \colon |S| \bmod 2 = 1}\left(\bigwedge_{i \in S}[p_i=1] \land \bigwedge_{i \not \in S}[p_i=0]\right)
\end{equation}
Now, we show that $[p_i=0]$ can be written as a~single 3-CNF, whereas $[p_i=1]$ can be expressed as an OR of two 3-CNFs. W.l.o.g. assume that $i=1$. The clauses of a~3-CNF expressing $[p_i=0]$ should reject all assignments to $x_1,x_2,x_3,x_4 \in \{0,1\}$ where $\IP(x_1,x_2,x_3,x_4)=1$. In all such assignments, one of the two monomials ($x_1x_2$ and $x_3x_4$) is equal to~0 whereas the other one is equal to~1. Hence, one needs to write down a~set of clauses rejecting the following four partial assignments: $\{x_1=0, x_3=x_4=1\}$, $\{x_2=0, x_3=x_4=1\}$, $\{x_1=x_2=1, x_3=0\}$, $\{x_1=x_2=1, x_4=0\}$. Thus, 
\[
[p_1(x_1,x_2,x_3,x_4)=0] \equiv 
(x_1\lor \neg x_3 \lor \neg x_4)
\land
(x_2\lor \neg x_3 \lor \neg x_4)
\land
(\neg x_1 \lor \neg x_2 \lor x_3)
\land
(\neg x_1 \lor \neg x_2 \lor x_4) \, .
\]
In turn, to express $[p_1=1]$ as an OR of two 3-CNFs we consider both assignments to~$x_1$:
\[[p_1(x_1,x_2,x_3,x_4)=1] \equiv \left((x_1) \land [x_2 \oplus x_3x_4 = 0]\right) \lor ((\neg x_1) \land [x_3x_4=1]) \, .\]
It remains to note that each of $[x_2 \oplus x_3x_4=0]$ and $[x_3x_4=1]$ can be written as a~3-CNF. Let $[p_i=0] \equiv P_i$ and $[p_i=1] \equiv ((x_i) \land Q_i) \lor ((\neg x_i) \land R_i)$ where $P_i$, $Q_i$, and $R_i$ are 3-CNFs. One may then expand~\eqref{eq:ipcnf} as follows:
\[
 \bigvee_{S \subseteq [\frac n4] \colon |S| \bmod 2 = 1}
 \left(
 \bigvee_{T \subseteq S} 
 \left(
 \bigwedge_{i \in T}\left((x_i) \land Q_i\right) \land \bigwedge_{i \in S \setminus T}((\neg x_i) \land R_i) \land \bigwedge_{i \not \in S}P_i
 \right)
 \right)
\]

The fan-in of the resulting OR-gate is
\[\sum_{S \subseteq [\frac n4] \colon |S| \bmod 2 = 1}2^{|S|} \le \sum_{i=0}^{\frac n4}\binom{n/4}{i}2^i =3^{\frac n4} \, .\]
\end{enumerate}
\end{proof}

\begin{openproblem}
Determine $s_3^3(\operatorname{IP})$.
\end{openproblem}

Besides finding the exact values of $\alpha(k)$, it would be interesting to find out whether every circuit of \emph{linear size} can be computed by a~non-trivial depth~3 circuit with constant bottom fan-in. We restate this open problem below.

\ouropenproblem*

This paper supports the conjecture by showing that it holds for small values of $c$. As another example, we can consider a class of functions where we know linear \emph{upper bounds} on circuit complexity.
For any \emph{symmetric} function~$f$ (i.e., a function whose value depends only on the sum over integers of the input bits) we know that $s(f) \le 4.5n+o(n)$~\cite{DKKY10}. It is also known~\cite{PSZ97,wolf} that symmetric functions can be computed by relatively small depth-$3$ circuits: $s_3^k(f)\le\operatorname{poly}(n) \cdot \left(1+1/k\right)^n$ (and this bound is tight~\cite{wolf}).


Since in our depth reduction results, we always get $k$-CNFs with small linear number of clauses, 
it is interesting to study the expressiveness of OR of exponential number of such $k$-CNFs. Let us define $\alpha(k,c)$ as the infimum of all values $\alpha$ such that any circuit of size at most $cn$ can be computed as an $\dm{2^{\alpha n}}{cn}{k}$. We can upper bound the rate of convergence of $\alpha(k,c)$ using the following width reduction result for CNF-formulas~\cite{schuler2005algorithm,calabro2006duality}.
\begin{theorem}[\cite{schuler2005algorithm,calabro2006duality}]\label{thm:width}
For any constant $0<\eps\leq1$ and a function $C\colon\mathbb{N}\to\mathbb{N}$, any CNF formula $f$ with $n$ variables and $n\cdot C(n)$ clauses can be expressed as $f=\text{OR}_{i=1}^{t} f_i$, where $t\leq 2^{\eps n}$ and each $f_i$ is a $k$-CNF formula with at most $n\cdot C(n)$ clauses, where $k=O\left(\frac{1}{\eps}\cdot\log\left(\frac{C(n)}{\eps}\right)\right)$.
\end{theorem}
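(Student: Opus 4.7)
\textbf{Proof proposal for Theorem~\ref{thm:width}.} The plan is to use the classical Schuler-style width-reduction branching. Starting from $F = f$, I repeatedly pick any clause $C = \ell_1 \vee \cdots \vee \ell_w$ with $w > k$ and branch into two subproblems using the tautology $(\ell_1 \vee \cdots \vee \ell_k) \vee (\neg\ell_1 \wedge \cdots \wedge \neg\ell_k) \equiv 1$: the ``B-branch'' replaces $C$ with the width-$k$ clause $\ell_1 \vee \cdots \vee \ell_k$ (which implies $C$, so $C$ is deleted); the ``A-branch'' conjoins $F$ with the $k$ unit clauses $\neg\ell_1, \dots, \neg\ell_k$, then simplifies by removing appearances of $\ell_j$ from other clauses and dropping any clause containing $\neg\ell_j$. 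Recursing on both branches until all clauses have width at most $k$ yields a binary tree whose leaves are $k$-CNFs whose disjunction equals $f$, giving the required decomposition $f = \bigvee_i f_i$.

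Next I verify the clause bound. A B-step replaces one clause with a shorter one and leaves the clause count unchanged. An A-step adds $k$ new unit clauses while only shortening or removing existing ones. On a root-to-leaf path with $a$ A-steps, the clause count is therefore at most $n\cdot C(n) + ak$. Since (as shown next) $a \leq n/k$, every leaf formula $f_i$ is a $k$-CNF with at most $n\cdot C(n)+n = O(n\cdot C(n))$ clauses, matching the claimed bound up to a constant absorbed into $C(n)$.

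The heart of the argument is the leaf count. On any root-to-leaf path, each A-step permanently fixes $k$ previously unassigned variables: in a simplified formula, every literal of every clause is on an unassigned variable, so the first $k$ literals of a wide clause lie on $k$ distinct fresh variables. Hence there are at most $n/k$ A-steps per path. Each B-step strictly decreases the number of currently wide clauses (the processed $C$ becomes narrow), and this count never increases, so each path contains at most $n\cdot C(n)$ B-steps. The total number of leaves is therefore at most $\binom{n\cdot C(n)+n/k}{n/k} \leq (2ekC(n))^{n/k}$, using $\binom{N}{r}\leq(eN/r)^r$ with $N\leq n(C(n)+1/k)\leq 2nC(n)$ and $r=n/k$.

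Finally I pick $k$ to make $(2ekC(n))^{n/k}\leq 2^{\eps n}$, i.e.\ $\log_2(2ekC(n))\leq \eps k$. A self-consistent choice is $k = \Theta\!\left(\tfrac{1}{\eps}\log\tfrac{C(n)}{\eps}\right)$: taking $k = (c/\eps)\log_2(C(n)/\eps)$ with $c$ sufficiently large gives $2^{\eps k} = (C(n)/\eps)^c$, which dominates $2ekC(n)$. The main obstacle is exactly this self-referential determination of $k$, together with the bookkeeping that keeps both the per-leaf clause count and the total leaf count within their respective budgets without losing constant factors that would inflate the final bound; once the branching rule and accounting are set up correctly, both become routine manipulations.
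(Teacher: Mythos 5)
The paper does not prove Theorem~\ref{thm:width} itself; it cites the Schuler / Calabro--Impagliazzo--Paturi width-reduction lemma. Your proposal correctly reproduces that standard argument: the $(\ell_1\vee\cdots\vee\ell_k)\vee(\neg\ell_1\wedge\cdots\wedge\neg\ell_k)$ branching rule, the bound of $n/k$ A-steps per path (each fixes $k$ fresh variables), the bound of $n\cdot C(n)$ B-steps per path (each B-step converts one wide clause to a narrow one and nothing ever creates new wide clauses), and the binomial leaf-count bound are all the right ideas, and the self-referential choice of $k=\Theta\!\left(\tfrac1\eps\log\tfrac{C(n)}{\eps}\right)$ closes the recursion as you say.

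Two small points of bookkeeping you should tighten. First, the leaf-count bound should really be $\sum_{i\le n/k}\binom{L}{i}$ (each leaf is identified by its set of A-positions, which has size \emph{at most} $n/k$, not exactly $n/k$); the extra factor $O(n/k)$ is absorbed into the constant in the big-$O$ for $k$, but it is worth stating. Second, and more substantively, your leaves have up to $n\cdot C(n)+n$ clauses, not $n\cdot C(n)$ as the statement asserts, because the A-branch genuinely must keep the $k$ unit clauses $\neg\ell_1,\dots,\neg\ell_k$ — without them the OR decomposition simply fails (an assignment satisfying one of $\ell_1,\dots,\ell_k$ could still satisfy the restricted formula but not $f$). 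You notice this and wave it away by absorbing the $+n$ into $C(n)$; that is essentially how the cited sources handle it too (replacing $C(n)$ by $C(n)+1$ changes $k$ only by a constant factor hidden in the big-$O$), but it does mean the literal ``at most $n\cdot C(n)$ clauses'' in the statement is slightly generous, and the reader deserves to be told so explicitly rather than implicitly. Also guard the edge case $C(n)/\eps$ close to $1$, where $\log(C(n)/\eps)$ can vanish; take $k = \Theta\!\left(\tfrac1\eps\log\max\{2, C(n)/\eps\}\right)$ instead.
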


For our applications, we are interested in $\alpha(k,c)$ for small fixed $c$.
Since for every $c$, $\alpha(k,c)$ is a non-increasing bounded sequence, we let $\alpha(\infty, c)=\lim_{k\to\infty}\alpha(k,c)$. Then Theorem~\ref{thm:width} implies that $\alpha(k,c)\geq \alpha(\infty,c)\geq \alpha(k,c)-O\big(\frac{\log(ck)}{k}\big)$.

\section{Applications}\label{sec:app}
In this section, we state formally the results that are presented in the last three row-blocks of~Table~\ref{table:comp}. Namely, we show that improving the parameters for the known explicit constructions
of the following pseudorandom objects imply circuits lower bounds via depth reduction techniques presented in the previous section:
\begin{itemize}
\item functions that are not constant on any large algebraic variety in $\{0,1\}^n$ defined by polynomials of small degree (such functions are called dispersers);
\item functions that agree with any polynomial of small degree on roughly half of the points in $\{0,1\}^n$;
\item matrices that are far from matrices of small rank.
\end{itemize} 
For comparison, we also show what these tools give when applied to Valiant's reductions.


\subsection{Dispersers}
In this section we show that dispersers for algebraic varieties over $\mathbb{F}_2$ cannot be computed by small circuits. We note that dispersers for varieties of degree one have been used for proving lower bounds on unrestricted circuits~\cite{DK11,FGHK16}, and it is known that an explicit construction of a disperser for varieties of degree two would slightly improve the known circuit lower bounds~\cite{GK16}. Now we show that dispersers for varieties of degree $16$ will give new circuit lower bounds via a new simple method.

\begin{definition}
A~set $S \subseteq \Ftwo^n$ is called an \emph{$(d,m)$-variety} if it is a set of common roots of at most $m$~polynomials of degree at most~$d$:
\[S = \{x \in \Ftwo^n \colon p_1(x)=\dotsb=p_m(x)=0, \, \deg(p_i) \le d \text{ for all $1 \le i \le m$}\} \, .\]
A set~$S$ is called a \emph{$d$-variety} (or a~variety of degree~$d$) if it is an $(d,\infty)$-variety.
\end{definition}

\begin{definition}
A~Boolean function $f \colon \Ftwo^n\rightarrow\Ftwo$ is called a {\em $(d,m,s)$-disperser} (for parameters $d, m$, and $s$ which possibly depend on $n$) if $f$~is non-constant on any $(d,m)$-variety $S \subseteq \Ftwo^n$ of size larger than~$s$.
\end{definition}

We will make use of the Sparsification Lemma first proven by Impagliazzo, Paturi and Zane~\cite{IPZ}. The dependence of $C$ on $k$ was later improved in~\cite{calabro2006duality}. (And this is essentially tight by~\cite{miltersen2005converting}.)
\begin{theorem}[Corollary 1 in~\cite{IPZ}, Section~6 in~\cite{calabro2006duality}]\label{thm:sparsification}
For all $\eps>0$ and positive $k$, there exists $C$ such that any $k$-CNF formula $f$ with $n$ variables can be expressed as $f=\text{OR}_{i=1}^{t} f_i$, where $t\leq 2^{\eps n}$ and each $f_i$ is a $k$-CNF formula with at most $Cn$ clauses, where $C=O\left(\left(\frac{k}{\eps}\right)^{3k}\right)$.
\end{theorem}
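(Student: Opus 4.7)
The plan is to give a branching procedure on the CNF $f$ that iteratively finds and ``splits'' a large sunflower of clauses. Each leaf of the branching tree will yield a $k$-CNF with at most $Cn$ clauses; the whole tree will have at most $2^{\eps n}$ leaves; and the disjunction of the leaf formulas will be logically equivalent to $f$. The two ingredients are (i) the Erd\H{o}s--Ko--Rado sunflower lemma (a family of more than $(i-1)!\,t^i$ sets of size exactly $i$ contains a sunflower with $t$ petals), and (ii) a carefully weighted potential function on the width profile of the current CNF.

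First I would describe the branching step. Starting from a $k$-CNF $g$, I scan $i=1,2,\dots,k$ and look for an $i$-sunflower $\{D_1,\dots,D_{t_i}\}$ of clauses (viewed as sets of literals) with common core $H$ of size $i$ and pairwise disjoint petals, where the threshold $t_i = (i-1)!\cdot(k/\eps)^{O(k-i)}$ is chosen below. If none exists for any $i$, then by the sunflower bound the total number of clauses of width $i$ is at most $(i-1)!\,t_i^i$, so summing over $i\le k$ bounds $|g|$ by $Cn$ with $C=O((k/\eps)^{3k})$; this is a leaf. Otherwise, let $i^*$ be the smallest width for which a sunflower exists, and branch on the core $H$: in the ``satisfied'' branch, at least one literal of $H$ is true, so I remove all $t_{i^*}$ sunflower clauses (each contains $H$) and assert $\bigvee H$; in the ``unsatisfied'' branch, every literal of $H$ is false, so I substitute those variables and the $t_{i^*}$ clauses collapse to the petals, which are pairwise disjoint clauses of width $k-i^*<k$. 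Both branches preserve logical equivalence when we take their OR, because the two cases are exhaustive.

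Next I would track a potential $\Phi(g)=\sum_{i=1}^{k}\alpha_i\,m_i(g)$, where $m_i(g)$ is the number of width-$i$ clauses and the weights $\alpha_1<\alpha_2<\dots<\alpha_k$ will be chosen so that each branching step strictly decreases $\Phi$ and in particular so that the ``unsatisfied'' branch (which deletes $t_{i^*}$ width-$i^*$ clauses but inserts up to $t_{i^*}$ width-$(k-i^*)$ clauses) loses $\alpha_{i^*}t_{i^*} - \alpha_{k-i^*}t_{i^*}$ potential on net, while the ``satisfied'' branch just loses $\alpha_{i^*}t_{i^*}$. Choosing the thresholds $t_i$ so that these net losses are comparable, together with the bookkeeping $\Phi(g)\le k\cdot \alpha_k\cdot|g|$ at the root, yields a depth bound on the branching tree that is linear in $n$ (since each step also reduces the number of free variables in the ``unsatisfied'' branch). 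The leaf count is bounded by $2^{d}$ where $d$ is this depth, and tuning the weights so $d\le\eps n$ gives the $2^{\eps n}$ bound on $t$. The choice $t_i\asymp (k/\eps)^{3(k-i)}$ yields the claimed $C=O((k/\eps)^{3k})$.

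The main obstacle is the potential-function bookkeeping: one has to exhibit weights $\alpha_i$ and sunflower thresholds $t_i$ such that (a) the ``no sunflower exists at any width'' fallback already certifies $|g|\le Cn$, (b) each branching step decreases $\Phi$ by at least a constant fraction of $\alpha_k$, and (c) the resulting branching-tree depth is $\eps n$ rather than, say, $O(\eps n\log(1/\eps))$. Balancing these three constraints is exactly what forces the exponent $3k$ (rather than $k$) in $C$, and is the source of the improvement of Calabro--Impagliazzo--Paturi over the original $\eps$-dependence in Impagliazzo--Paturi--Zane; matching the exponent $3k$ rather than a worse one is the delicate part and would require the careful geometric choice $\alpha_i = (k/\eps)^{i}$ (or similar) together with a tight accounting of how many ``unsatisfied'' branches can occur before a variable must be fixed.
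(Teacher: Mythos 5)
The paper does not prove this theorem; it is cited verbatim from Impagliazzo--Paturi--Zane and Calabro--Impagliazzo--Paturi, so there is no in-paper proof to compare against. Evaluating your sketch against the argument in those sources, I see a genuine gap and one significant conceptual mismatch.

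The central gap is in the depth bound. You write $\Phi(g)\le k\cdot\alpha_k\cdot|g|$ at the root and conclude the branching tree has depth ``linear in $n$.'' But $|g|$ is not $O(n)$ at the root --- that is precisely what the Sparsification Lemma is trying to establish. A $k$-CNF on $n$ variables can have $\Theta(n^k)$ clauses, so your root potential is polynomial in $n$, and with a per-step decrease of $\Theta(\alpha_k)$ (a constant) the depth bound you get is $\mathrm{poly}(n)$, not $\eps n$, yielding $2^{\mathrm{poly}(n)}$ leaves rather than $2^{\eps n}$. The parenthetical ``since each step also reduces the number of free variables in the unsatisfied branch'' only controls the \emph{unsatisfied} branches (at most $n$ of those on any path), but the \emph{satisfied} branches fix no variables, and a path could in principle take many of them. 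The crux of IPZ/CIP is exactly this amortization: bounding the total number of satisfied branches per path by $O(\eps n)$ requires charging them to something that is linear in $n$ (variables, or a width-weighted sum whose root value is $O(n)$ because of how clauses are created and destroyed), not to the initial clause count. Your sketch does not supply that accounting.

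The secondary mismatch is the appeal to the Erd\H{o}s--Ko--Rado sunflower lemma. The ``sunflower''/``flower'' in IPZ and CIP is a weaker object: a set of clauses sharing a common core (heart), with \emph{no} requirement that the petals be pairwise disjoint. This weaker structure is crucial: the ``no large flower'' fallback gives the $Cn$ bound by a counting argument over single literals (each literal lies in fewer than $\theta_1$ clauses, $2n$ literals, hence at most $2n\theta_1$ clauses), which is how the $n$ dependence in $Cn$ arises. With genuine EKR sunflowers the fallback bound is a constant independent of $n$ (the threshold $w!(t-1)^w$ has no $n$ in it) --- not wrong per se, but a sign you are proving a different statement; and EKR applies to set families of a fixed size $w$ and guarantees a sunflower whose core has \emph{some} size $<w$, so it does not let you ``scan $i$ and look for an $i$-sunflower with core of size exactly $i$'' as you describe. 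If you switch to the IPZ heart/flower notion (many clauses sharing a prescribed core, possibly with overlapping petals), the fallback and the branching step both become cleaner, and you will be forced to confront the real difficulty, which is the amortized bound on the number of satisfied branches.
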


Now we are ready to state the main result of this section.
\begin{theorem}\label{thm:dispersers}
Let $f\colon\F^n\to\F$ be a~function with $|f^{-1}(1)|\geq|f^{-1}(0)|$ and $\eps>0$ be a~constant.\footnote{If $|f^{-1}(1)|<|f^{-1}(0)|$, one can consider the negation of $f$, since taking negations does not change the disperser parameters.}
\begin{itemize}
\item If $f$ is an $(16, 1.3(1-\eps)n, 2^{\eps n})$-disperser, then
$s(f) \ge 3.9(1-\eps)n-4$.
\item If $f$ is an $(\omega(1), O(n), 2^{(1-\eps)n})$-disperser, 
then 
$s_{\text{sp}}(f)=\omega(n)$.
\item If $f$ is $(2^{(\log{n})^{1-o(1)}}, \infty, 2^{(1-\eps) n})$-disperser,
then 
$s_{\log}(f)=\omega(n)$.
\item If $f$ is $(n^{\eps}, \infty, 2^{n-\omega(n/\log\log{n})})$-disperser,
 then 
$s_{\log}(f)=\omega(n)$.
\end{itemize}
\end{theorem}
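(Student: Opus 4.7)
The plan is to prove all four parts by a single contrapositive template: assume the relevant circuit-complexity measure of~$f$ is small, apply the appropriate depth reduction to write~$f$ as an OR of not-too-many AND-branches, interpret each branch's satisfying set as an algebraic variety, and then combine the hypothesis $|f^{-1}(1)|\geq 2^{n-1}$ with a pigeonhole argument to find a single branch whose variety exceeds the disperser's size threshold while still lying inside~$f^{-1}(1)$, thereby forcing the disperser to be constantly~$1$ on it---a contradiction.

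For part~1 I would apply Theorem~\ref{thm:transformations}(2) rather than the coarser Theorem~\ref{thm:maintransformation}, using the $\df{2^{\lceil s/3.9\rceil}}{\lceil s/3\rceil}{15}$ form it produces. Each of the $\lceil s/3\rceil$ sub-circuits feeding an AND-branch has size at most~$15$, so it depends on at most $16$ input variables and is therefore computed by a single multilinear $\mathbb{F}_2$-polynomial of degree $\leq 16$; the satisfying set of the branch is then the common zero-set of $\lceil s/3\rceil$ such polynomials, i.e., a $(16,\lceil s/3\rceil)$-variety. For $s<3.9(1-\eps)n-O(1)$ one checks that $\lceil s/3\rceil\leq 1.3(1-\eps)n$ and that the number of branches is sufficiently below $2^{(1-\eps)n}$ for pigeonhole (against $|f^{-1}(1)|\geq 2^{n-1}$) to push some variety past $2^{\eps n}$ points, violating the $(16,1.3(1-\eps)n,2^{\eps n})$-disperser property.

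For parts~2--4 the plan is to invoke the three cases of Theorem~\ref{thm:valiant}, obtaining a $\dt{p}{q}$ representation of~$f$. Each bottom clause $\ell_1\lor\cdots\lor\ell_q$ is equivalent to requiring the degree-$q$ polynomial $\bar\ell_1\cdots\bar\ell_q$ to vanish, so the satisfying set of an AND-branch is a variety of degree~$q$ with one polynomial per clause. In parts~3 and~4 the disperser permits $m=\infty$, so the unbounded middle AND causes no trouble; I would just pick Valiant's~$\eps$ smaller than the disperser's threshold parameter and apply pigeonhole, checking that $2^{(\log n)^{1-\delta}}\leq 2^{(\log n)^{1-o(1)}}$ for part~3 and $\delta n/\log\log n<\omega(n/\log\log n)$ for part~4. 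Part~2 is more subtle because the disperser insists on $m=O(n)$; here I would additionally apply the Sparsification Lemma (Theorem~\ref{thm:sparsification}) to each constant-width CNF produced by Valiant, converting it into an OR of $2^{\eps''n}$ CNFs with only $O(n)$ clauses at the cost of a mild exponential blow-up in the top fan-in, and observe that the disperser's $\omega(1)$ degree parameter eventually dominates the fixed constant width~$\delta$ from Valiant's reduction.

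The hard part will be the constant-tracking in part~1: the essential realization is that the middle AND must be viewed as an AND of $\lceil s/3\rceil$ \emph{small sub-circuits} (each contributing one polynomial constraint) rather than as an AND of $2^{14}s$ individual clauses; without this observation the number of polynomials would exceed the~$1.3(1-\eps)n$ budget that the disperser gives us. A secondary bookkeeping issue in part~2 is to coordinate Valiant's parameter~$\eps'$ with the sparsification parameter~$\eps''$ so that $\eps'+\eps''$ stays strictly below the disperser's~$\eps$.
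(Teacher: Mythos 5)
Your proposal is correct and follows essentially the same approach as the paper: for part~1 apply Theorem~\ref{thm:transformations}(2) to view each AND-branch as the common zero-set of $\lceil s/3\rceil$ degree-$\leq 16$ polynomials (the key observation you flag, that one must use the $\df{\cdot}{\cdot}{15}$ form rather than the exploded CNF, is exactly what the paper does), and for parts~2--4 apply the appropriate case of Valiant's reduction followed by pigeonhole, with the Sparsification Lemma inserted only in part~2 to bring the clause count down to $O(n)$, splitting the budget $\eps$ between Valiant's and the sparsifier's exponents just as the paper does (paper chooses $\eps/3$ for each).
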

\begin{proof}
\leavevmode
\begin{itemize}
\item From Theorem~\ref{thm:transformations}, we know that if $f$ is computable by a circuit of size~$s$, then $f$ is also computable by a circuit $\C \in \ensuremath{\text{OR}_{2^{s/3.9}} \circ \text{AND}_{s/3} \circ C(15)}$. Let $t=2^{s/3.9}$, and let $f_1,\ldots,f_t\colon\F^n\to\F$ be the $t$~functions computed in the gates of the $\text{AND}$ level of $\C$. Since $f=\text{OR}_{i=1}^{t} f_i$, we have that
$f^{-1}(1)=\bigcup_{i=1}^{t} f_i^{-1}(1)$.
Thus,
\begin{align}\label{eq:size}
2^{n-1}\leq\left|f^{-1}(1) \right| \leq \sum_{i=1}^{t} |f_i^{-1}(1)| \leq t\cdot\max_{i} |f_i^{-1}(1)| \, .
\end{align}
Each $f_i$ is an $\ensuremath{\text{AND}_{s/3} \circ C(15)}$, that is, a~set of common roots of $s/3$~polynomials of degree $16$ (recall that over $\mathbb{F}_2$ every monomial is multilinear; hence a~circuit of size~15 computes a~polynomial of degree at most~16). Since $f$~is a~disperser for varieties of size $2^{\eps n}$ defined by~$s/3$ polynomials of degree~$16$, each $f_i^{-1}(1)\leq 2^{\eps n}$. Now, \eqref{eq:size}~implies that $s/3.9\geq n-\eps n -1$.
\item The proofs of items (2)--(4) of this theorem follow the same pattern, so we only present the proof of the second item. Assume, towards a contradiction, that an $(\omega(1), O(n), 2^{(1-\eps)n})$-disperser $f$ can be computed by a series-parallel circuit of size $cn$. From Theorem~\ref{thm:valiant}, such a circuit can be expressed as a~circuit $\C \in \dt{2^{\frac{\eps n}{3}}}{k}$ for $k=k(c, \eps)$. By Theorem~\ref{thm:sparsification}, each $k$-CNF computed by the AND gates of $\C$, can be replaced by an $\text{OR}$ of $2^{\frac{\eps n}{3}}$ $k$-CNFs with $Cn$ clauses each where $C=C(\delta,\eps)$. Let $t=2^{\frac{2\eps n}{3}}$, and let $f_1,\ldots,f_t\colon\F^n\to\F$ be the $t$ $k$-CNFs with $Cn$ clauses whose OR computes $f$. Now we have that
each $f_i$ is an $\ensuremath{\text{AND}_{Cn} \circ \text{OR}_{k}}$, that is, a set of common roots of $Cn$ polynomials of degree $k$ (each computing an $\ensuremath{\text{OR}_{k}})$. From the disperser property of $f$, we have that each $f_i$ computes at most $2^{(1-\eps)n}$ ones of $f$. Therefore, in order to compute all $\geq2^{n-1}$ ones of $f$, $t$ must be greater than $2^{\eps n -1}$, which contradicts the definition $t=2^{\frac{2\eps n}{3}}$.
\end{itemize}

\end{proof}

We remark that in the first item of Theorem~\ref{thm:dispersers}, even dispersers for varieties defined by $1.3(1-\eps)n$ functions of $16$ variables (rather than all polynomials of degree $16$) will suffice for proving a lower bound.

In order to prove a new circuit lower bound against unrestricted circuits, it suffices to construct a $(16, 1.05n, 2^{0.2 n})$-disperser. There are known constructions of dispersers for constant-degree varieties over large fields~\cite{D12,ben2012extractors,li2018improved}. For $\mathbb{F}_2$, a long line of work achieved almost optimal dispersers for degree~$d=1$ varieties, which are not constant on sets of size $2^{(\log{n})^c}$ for a~constant~$c$~\cite{li2016improved}. Also, the known constructions can handle large varieties of large degrees~\cite{R16}, or smaller varieties of size $2^{\alpha n}$ of constant degree (for a constant $\alpha$)~\cite{li2018improved}. On the other hand, the result of Cohen and Tal~\cite[Theorem 5]{CohenT15}, together with an efficient construction of affine dispersers from~\cite{li2016improved}, gives an explicit construction of $\left(16, \frac{n}{(\log{n})^c}, 2^{o(n)}\right)$-disperser (it handles varieties of the desired size, but only defined by fewer polynomials). Thus, although the currently known constructions do not suffice for proving new lower bounds, they are tantalizingly close to the ones needed for a simple proof of circuit lower bounds via Theorem~\ref{thm:transformations}.

We conclude this section with a~simple counting argument showing that a random function is a~disperser with great parameters.
\begin{restatable}{lemma}{random}
Let $d=d(n)$, $m=m(n)$, $s=s(n)$ be such that $s > 3dmn^d$. Then a~random function $f\colon\Ftwo^n\rightarrow\Ftwo$ is a $(d,m,s)$-disperser with probability $1-o(1)$.
\end{restatable}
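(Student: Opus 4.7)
The plan is a direct union bound over all $(d,m)$-varieties of size exceeding $s$, comparing their count to the probability that a random function is constant on any particular one of them.

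First, I would bound the number of $(d,m)$-varieties. Over $\Ftwo$, each polynomial of degree at most $d$ is determined by its coefficients on the multilinear monomials of degree at most $d$, of which there are $\binom{n}{\leq d} \leq \sum_{i=0}^d n^i \leq 2n^d$ (for $n\geq 2$). Hence there are at most $2^{2n^d}$ polynomials of degree $\leq d$, and any $(d,m)$-variety is the common zero set of some ordered $m$-tuple of such polynomials, so the total count of $(d,m)$-varieties is at most $\left(2^{2n^d}\right)^m = 2^{2mn^d}$.

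Second, for any fixed $S \subseteq \Ftwo^n$ with $|S|>s$, a uniformly random $f\colon\Ftwo^n\to\Ftwo$ is constantly $0$ on $S$ with probability $2^{-|S|}$ and constantly $1$ with the same probability, so $f$ is constant on $S$ with probability $2\cdot 2^{-|S|} < 2^{1-s}$. Combining via a union bound over all $(d,m)$-varieties of size $>s$, the probability that $f$ fails to be a $(d,m,s)$-disperser is at most
\[
2^{2mn^d}\cdot 2^{1-s} \;=\; 2^{\,2mn^d - s + 1}.
\]
The hypothesis $s > 3dmn^d$ gives (for $d\geq 1$) that $s \geq 3mn^d$, so the exponent is at most $-mn^d + 1 \to -\infty$ as $n\to\infty$, and the failure probability is $o(1)$.

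I do not expect any substantive obstacle: the argument is a textbook counting/union bound. The only minor point is the combinatorial estimate $\binom{n}{\leq d}\leq 2n^d$, and the generous factor of $3d$ in the hypothesis (versus the constant $2$ that actually appears in the exponent) comfortably absorbs the slack, leaving enough room for the failure bound to tend to zero.
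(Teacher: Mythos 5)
Your proof is correct and is essentially the same counting/union-bound argument as the paper's: the paper counts non-disperser functions directly by specifying a variety, the constant value on it, and the values off it, arriving at the same failure bound $2^{2dmn^d+1-s}$ (using the looser monomial count $\sum_{i\le d}\binom{n}{i}\le 2dn^d$ where you use $\le 2n^d$). The only cosmetic difference is that you phrase it as a union bound over varieties rather than counting bad functions; both yield the same conclusion.
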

\begin{proof}
Consider a~function~$f$ that is not a $(d,m,s)$-disperser. That is, $f$ is constant on some $(d,m)$-variety. In particular, $f$ can be uniquely specified by 
\begin{enumerate}
\item a $(d,m)$-variety $V$ where $f$ is constant,
\item one of the two possible constant values that $f$ takes on $V$,
\item values at the remaining (at most $2^n-s$) points.
\end{enumerate}

There are $k=\sum_{i=0}^{d}\binom{n}{i}\leq2dn^d$ monomials of degree at most~$d$ over $\{x_1, \dotsc, x_n\}$ (as any monomial is multilinear). Therefore, there are $2^k$ polynomials of degree at most~$d$, and at most $2^{mk}$ 
$(d,m)$-varieties.
Therefore, the number of functions $f$ which are not $(d,m,s)$-dispersers is bounded from above by
\begin{align*}
2^{mk}\cdot 2 \cdot 2^{2^n-s} \leq 2^{2dn^dm+1+2^n-s}\leq2^{2^n}\cdot o(1)
\end{align*}
Thus, a random function is an $(d,k,s)$-disperser with probability at least $1-o(1)$.
\end{proof}

\subsection{Correlation with Polynomials}
In this section we show that a~function that has small correlation with low-degree polynomials has high circuit complexity. We show this by using a known connection between correlation with polynomials and dispersers for varieties.

\begin{definition}
For two functions $f,g\colon\F^n\to\F$, we define their correlation as
\begin{align*}
\Cor(f,g) = \left| \Pr_{x}[f(x)=g(x)]-\Pr_{x}[f(x)\neq g(x)]\right| \, ,
\end{align*}
where $x$ is drawn uniformly at random from $\F^n$.
\end{definition}
By $\Cor(f,d)$ we denote the correlation of a~function~$f$ with polynomials of degree~$d$:
\begin{align*}
\Cor(f,d)=\max_{g}\Cor(f,g) \, ,
\end{align*}
where the maximum it taken over all polynomials $g$ of degree at most $d$.

There are several constructions of functions that have small correlation with polynomials of low degree~\cite{R87, S87, BNS92, VW08, D12, R16}, or sparse polynomials~\cite{V07}. In particular, the generalized inner product function has correlation $2^{-\Omega\left(\frac{n}{4^d\cdot d}\right)}$ with polynomials of degree~$d$~\cite{BNS92}, and Viola and Wigderson~\cite{VW08} constructed a function with correlation $2^{-\Omega\left(\frac{n}{2^d}\right)}$ with polynomials of degree $d$. See~\cite{V09} for an overview of the known bounds on correlation.

We use the fact that small correlation with polynomials of degree $d$ implies small correlation with products of polynomials of degree $d$, and, as a consequence, a disperser for varieties of degree~$d$.
\begin{restatable}[Implicit in \cite{D12,CT18,li2018improved}]{lemma}{fourier}\label{lem:fourier}
If $\Cor(f,d)\leq\eps$, then $f$ is $(d, \infty, \eps \cdot 2^n)$-disperser.
\end{restatable}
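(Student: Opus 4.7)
The plan is to argue by contrapositive. Suppose $f$ is \emph{not} a $(d,\infty,\eps\cdot 2^n)$-disperser. Then there exists a $d$-variety $V\subseteq\Ftwo^n$ with $|V|>\eps\cdot 2^n$ on which $f$ is constant, say $f|_V\equiv c\in\{0,1\}$. Write $V$ as the common zero set of polynomials $p_1,\dots,p_m$ of degree at most $d$; since $\Ftwo^n$ is finite, we may assume $m$ is finite (the ideal of polynomials vanishing on $V$ is finite-dimensional). The goal is then to exhibit a single polynomial of degree $\le d$ whose correlation with $f$ strictly exceeds $\eps$, contradicting the hypothesis $\Cor(f,d)\le\eps$.

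The construction is probabilistic. For a uniformly random subset $S\subseteq[m]$, define the polynomial
\[
g_S(x) \;=\; c \,\oplus\, \bigoplus_{i\in S} p_i(x),
\]
which has degree at most $d$. For $x\in V$, every $p_i(x)=0$, so $g_S(x)=c=f(x)$ with certainty. For $x\notin V$, the set $T(x)=\{i: p_i(x)=1\}$ is nonempty, and the standard fact that $|S\cap T|\bmod 2$ is uniform in $\{0,1\}$ over uniform $S$ (whenever $T\neq\emptyset$) gives $\Pr_S[g_S(x)=f(x)]=\tfrac12$.

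Combining these two cases, the expected number of agreements satisfies
\[
\E_S\bigl[\,\bigl|\{x: g_S(x)=f(x)\}\bigr|\,\bigr] \;\ge\; |V| + \tfrac{1}{2}|\Ftwo^n\setminus V| \;=\; 2^{n-1} + \tfrac{|V|}{2} \;>\; 2^{n-1}\bigl(1+\eps\bigr).
\]
Hence some $S^*$ achieves $\Pr_x[f(x)=g_{S^*}(x)]>\tfrac12+\tfrac{\eps}{2}$, and therefore $\Cor(f,g_{S^*})=|2\Pr_x[f=g_{S^*}]-1|>\eps$, contradicting $\Cor(f,d)\le\eps$.

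There is no serious obstacle: the only subtlety is justifying that a $d$-variety is definable by \emph{finitely} many degree-$\le d$ polynomials (which follows from the finiteness of the space of multilinear polynomials on $\Ftwo^n$) so that random subsets $S\subseteq[m]$ make sense, and that XOR-ing with the constant $c$ keeps the degree at most $d$. Both are immediate, and the random-parity trick on the nonempty sets $T(x)$ is the only nontrivial step.
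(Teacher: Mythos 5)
Your proof is correct, and it is essentially the paper's argument in probabilistic clothing: the average over random subsets $S\subseteq[m]$ of the parity $\bigoplus_{i\in S}p_i$ is exactly the Fourier expansion $\prod_i(1\oplus q_i)=2^{-k}\sum_{S}(-1)^{\sum_{i\in S}q_i}$ that the paper writes down, and your averaging/probabilistic-method step is the triangle inequality in that expansion. The only cosmetic difference is that you argue by contrapositive and pin down a single good polynomial $g_{S^*}$, while the paper directly bounds $\bigl|\E_x[(-1)^{f(x)}\mathbf{1}_V(x)]\bigr|\le\eps$.
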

\begin{proof}
Consider a variety $V = \{x \in \Ftwo^n \colon q_1(x)=\dotsb=q_k(x)=0\}$, where each $q_i\colon\F^n\to\F$ is a non-constant polynomial of degree at most~$d$. Let $g(x)=\prod_{i=1}^k (q_i(x)\oplus1)$ be the indicator function of~$V$, and from the Fourier expansion we have 
\begin{align}
\label{eq:and fourier}
g(x) = \frac{\sum_{S\subseteq\{1,\ldots,k\}}(-1)^{\sum_{i\in S}q_i(x)}}{2^{k}} \, .
\end{align}
Now note that for any $S\subseteq\{1,\dotsc,k\}$, 
\begin{align*}
\left|\E_x\left[(-1)^{f(x)+\sum_{i\in S}q_i(x)}\right]\right|=\Cor\left(f,\sum_{i\in S}q_i(x)\right)\leq \eps \, ,
\end{align*}
 because $\sum_{i\in S}q_i(x)$ is a polynomial of degree at most $d$ and $\Cor(f,d)\leq \eps$.
Now
\begin{align*}
\left|\E_x\left[(-1)^{f(x)}\cdot g(x)\right]\right|&=\left|\E_x\left[(-1)^{f(x)}\cdot \frac{\sum_{S\subseteq\{1,\ldots,k\}}(-1)^{\sum_{i\in S}q_i(x)}}{2^{k}}\right]\right|\\
&=\frac{1}{2^{k}}\left|\E_x\left[\sum_{S\subseteq\{1,\ldots,k\}}(-1)^{f(x)+\sum_{i\in S}q_i(x)}\right] \right|\\
&\leq \frac{1}{2^{k}} \sum_{S\subseteq\{1,\ldots,k\}}\left|\E\left[(-1)^{f(x)+\sum_{i\in S}q_i(x)}\right] \right|\\
&\leq \frac{2^k \eps}{2^k}=\eps\, .
\end{align*}
In particular, for any variety $V$ of size $|V|>\eps 2^n$, $f(x)$ is not constant on $V$.
\end{proof}

Now Theorem~\ref{thm:dispersers} and Lemma~\ref{lem:fourier} imply the following result.\footnote{We remark that we do not apply these results to the depth reduction presented in this paper, but only to Valiant's depth reduction. Indeed, it would only give us a statement of the form: If $\Cor(f,16)\leq 2^{-n(1-\eps)}$, then $s(f) \ge 3.9(1-\eps)n-4$. But as we noted in Section~\ref{sec:motivating}, every Boolean function has correlation at least $2^{-n/2}$ with some linear polynomial.}
\begin{theorem}
Let $f \in B_n$ and $\eps>0$ be a~constant.
\begin{itemize}
\item If $\Cor(f,\omega(1))\leq 2^{-\eps n}$,
then $s_{\text{sp}}(f) = \omega(n)$.
\item If $\Cor(f,2^{(\log{n})^{1-o(1)}})\leq 2^{-\eps n}$,
then $s_{\log}(f)=\omega(n)$.
\item If $\Cor(f,n^{\eps})\leq 2^{-\omega(n/\log\log{n})}$,
then $s_{\log}(f)=\omega(n)$.
\end{itemize}
\end{theorem}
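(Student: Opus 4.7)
The plan is to derive each of the four items by chaining two ingredients already established in the paper: Lemma~\ref{lem:fourier}, which converts a correlation bound into a disperser property, and the matching item of Theorem~\ref{thm:dispersers}, which converts that disperser property into a circuit lower bound. Concretely, starting from $\Cor(f,d)\le \eps$, Lemma~\ref{lem:fourier} gives that $f$ is a $(d,\infty,\eps\cdot 2^n)$-disperser; since the family of $(d,m)$-varieties is monotone in~$m$, a $(d,\infty,s)$-disperser is in particular a $(d,m,s)$-disperser for every finite~$m$, so we may feed it into whichever item of Theorem~\ref{thm:dispersers} uses the matching degree~$d$ and a~suitable~$m$.

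For the first item, Theorem~\ref{thm:dispersers}(1) requires the balance condition $|f^{-1}(1)|\ge |f^{-1}(0)|$. This is handled by a~WLOG negation: the correlation hypothesis $\Cor(f,16)\le 2^{-(1-\eps)n}$ already forces $|\Pr[f=0]-\Pr[f=1]|=o(1)$, and if the majority value is~$0$ we simply replace~$f$ by~$\neg f$, which changes neither $s(f)$ nor $\Cor(f,16)$. With that in place, Lemma~\ref{lem:fourier} yields a $(16,\infty,2^{\eps n})$-disperser and hence a $(16,1.3(1-\eps)n,2^{\eps n})$-disperser, and Theorem~\ref{thm:dispersers}(1) delivers $s(f)\ge 3.9(1-\eps)n-4$.

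The remaining three items follow the same two-step template with the appropriate parameter substitutions: $\Cor(f,\omega(1))\le 2^{-\eps n}$ produces an $(\omega(1),O(n),2^{(1-\eps)n})$-disperser, feeding into Theorem~\ref{thm:dispersers}(2) to give $s_{\text{sp}}(f)=\omega(n)$; $\Cor(f,2^{(\log n)^{1-o(1)}})\le 2^{-\eps n}$ produces the disperser required by Theorem~\ref{thm:dispersers}(3); and $\Cor(f,n^{\eps})\le 2^{-\omega(n/\log\log n)}$ produces the disperser required by Theorem~\ref{thm:dispersers}(4), both yielding $s_{\log}(f)=\omega(n)$.

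Since both ingredients are already established, there is no real obstacle: the only point to check is that the parameters line up (degree, number of polynomials, and size thresholds), and in particular that the $\eps\cdot 2^n$ variety-size threshold supplied by Lemma~\ref{lem:fourier} matches or beats the thresholds $2^{\eps n}$, $2^{(1-\eps)n}$, $2^{n-\omega(n/\log\log n)}$ demanded by the corresponding items of Theorem~\ref{thm:dispersers}, which it does in each case.
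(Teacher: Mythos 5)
Your proposal is correct and is exactly the paper's argument: the paper introduces this theorem with ``Now Theorem~\ref{thm:dispersers} and Lemma~\ref{lem:fourier} imply the following result'' and gives no further proof, which is precisely the two-step chain (correlation $\Rightarrow$ disperser via Lemma~\ref{lem:fourier}, then disperser $\Rightarrow$ lower bound via the matching item of Theorem~\ref{thm:dispersers}) that you spell out, including the WLOG negation to meet the balance hypothesis and the monotonicity-in-$m$ observation needed for items~1 and~2.
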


\subsection{Rigidity}
In order to prove super-linear circuit lower bounds for log-depth circuits via Valiant's reduction, one needs to construct matrices~$M$ with rigidity $\R_M\left(\frac{\delta n}{\log\log{n}}\right) > n^\eps$ or rigidity $\R_M(\eps n) > 2^{(\log{n})^{1-\delta}}$ for some constant $\eps>0$ and every constant $\delta>0$. For super-linear lower bounds for series-parallel circuits, one needs to find matrices with rigidity $\R_M\left(\eps n\right) > \delta$. Also, Razborov~\cite{R89} proved that rigidity $\R_M\left(2^{(\log\log{n})^c}\right) > \frac{n}{2^{(\log\log{n})^\eps}}$ for all $c \geq 1$ gives a language that does not belong to the polynomial hierarchy for communication complexity. The best known explicit lower bound on rigidity for every~$r$ is $\R(r)\geq\Omega\left(\frac{n}{r}\log{\frac{n}{r}}\right)$~\cite{friedman1993note,pudlak1991computation,shokrollahi1997remark,Lokam09}.\footnote{There is also a semi-explicit construction due to Goldreich and Tal~\cite{goldreich2016matrix}. This construction can be constructed in plain-exponential time $2^{O(n)}$ and has rigidity $\R(r)\geq\Omega\left(\frac{n^2}{r^2\log{n}}\right)$ for every $r\geq\sqrt{n}$. This bound is better than the known explicit bounds for $r=o\left(\frac{n}{\log{n}\log\log{n}}\right)$. It is also known~\cite{KV18} how to construct a matrix with rigidity as high as $\R(r)\geq\Omega(n)$ for any rank $r=n^{0.5-\eps}$ using subexponential time $2^{o(n)}$ .} Thus, for new bounds via Valiant's reduction (or Razborov's reduction for communication complexity), one needs to improve the known bounds asymptotically. 

In order to get new circuit lower bounds via Theorem~\ref{thm:linear}, we need to find a matrix $M\in\F^{n\times n}$ with rigidity $\R_M(0.75 n)>16$ (or a~rectangular matrix $M\in\F^{m\times n}$ for $m\geq n$ which is rigid for higher rank $\R_M(\frac{n}{2}+\frac{m}{4})>16$).
There are several explicit construction of matrices having rigidity $\R(\eps n)>16$ for some constant~$\eps$~\cite{friedman1993note,pudlak1991computation,shokrollahi1997remark,Lokam09}.  Valiant~\cite{V77} showed that a~random matrix $M\in\F^{n\times n}$ has rigidity $\R(r)\geq \frac{(n-r)^2-2n-\log{n}}{n\log(2n^2)}$ for any $r<n-\sqrt{2n+\log{n}}$. In particular, $\R_M(n-6\sqrt{n\log{n}})> 16$ for a~random matrix $M$. As for explicit constructions, Pudl{\'a}k and Vav{\v{r}}{\'\i}n~\cite{pudlak1991computation} found the exact value of rigidity (for every rank $r$) of the upper triangular matrix $T_n\in\F^{n\times n}$. In particular, they showed that $\R(\frac{n}{65})>16$. A matrix which is rigid for larger values of rank (at the price of having more outputs) was given in~\cite{pudlak1994some} and~\cite[Theorem~3.36]{jukna2013complexity}: A generator matrix $M\in\F^{m\times n}$ of a linear code with relative distance $\delta>0$ for any $r\leq n/16$ has rigidity 
\begin{align*}
\R_M(r)\geq \frac{\delta n \log(n/r)}{8(r+\log(n/r))} \, .
\end{align*}
We now show that using the ideas from~\cite{friedman1993note,shokrollahi1997remark}, one can improve this constant, but this is still not sufficient for getting new bounds using Theorem~\ref{thm:linear}. 

Recall that $\H(x)=-x\log{x}-(1-x)\log(1-x)$ for $0<x<1$, and that the generator matrix $M\in\F^{m\times n}$ of a code can always be transformed such that the first $n$ rows of $M$ form the identity matrix.

\begin{restatable}{lemma}{codes}\label{lem:rigidity}
Let $A\in\F^{(m-n)\times n}$, and let $I\in\F^{n\times n}$ be the identity matrix. If $M=
\begin{bmatrix}
I \\ A
\end{bmatrix}$ is a generator matrix of a linear code with relative distance $\delta$ and rate $R=\frac{n}{m}$, then $\R_A(r)>16$ for
\begin{align*}
r=\max_{0<\alpha<1} \left(\alpha n \cdot \H\left(\frac{\delta(1-\alpha)}{2\alpha(1-\alpha)R+32\alpha}\right)\right)-o(n) \, .
\end{align*}
\end{restatable}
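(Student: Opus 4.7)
The approach is to use the classical Friedman--Shokrollahi-style framework for lower bounding rigidity of generator matrices of error-correcting codes. Suppose for contradiction that $\R_A(r)\le 16$, so that $A=A'\oplus B$ with $\rk_{\F_2}(A')\le r$ and $B$ having at most $16$ ones per row. My plan is to exhibit a nonzero codeword of $M$ of weight strictly less than $\delta m$, contradicting the minimum-distance hypothesis on the code generated by $M$.

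First I will invoke an averaging argument on the columns of $B$: since the total number of ones in $B$ is at most $16(m-n)$, taking $T\subseteq[n]$ to be the $\alpha n$ columns of $B$ of smallest Hamming weight yields $\sum_{j\in T}\mathrm{wt}(B_{:,j})\le 16\alpha(m-n)$, and therefore $\mathrm{wt}(Bv)\le 16\alpha(m-n)$ for every $v\in\F_2^n$ supported on $T$ (since each $1$ in $B|_T$ contributes to at most one coordinate of $Bv$). Now letting $L=\{v\in\F_2^n:\mathrm{supp}(v)\subseteq T\}$ and $V:=L\cap\ker(A')$, the subspace $V$ has dimension at least $\alpha n-r$, and for every $v\in V$ we have $Mv=(v,Bv)$ because $A'v=0$; hence the minimum-distance hypothesis forces $\mathrm{wt}(v)+\mathrm{wt}(Bv)\ge \delta m$ for nonzero $v$, giving $\mathrm{wt}(v)\ge W:=\delta m-16\alpha(m-n)$ for every nonzero $v\in V$.

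Second, I will apply a sphere-packing (Hamming) bound to the image $V'=\{(v,Bv):v\in V\}$, treated as a code of dimension at least $\alpha n-r$ supported inside the at most $\alpha n+16\alpha(m-n)$ \emph{active} coordinates of $\F_2^m$ with minimum distance at least $\delta m$. Using the standard entropy estimate $\binom{N}{\le pN}=2^{N\H(p)\pm o(N)}$, this rearranges into a lower bound on $r$ of the shape $r\ge \alpha n\cdot\H(\,\cdot\,)-o(n)$, where the argument of $\H$ is a ratio involving the remaining distance $\delta m-\mathrm{wt}(v)$ and twice the effective length of the joint code. A final optimization over $\alpha\in(0,1)$ together with the substitution $R=n/m$ then produces the claimed formula.

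The main obstacle I anticipate is the precise bookkeeping needed to make the argument of $\H$ come out equal to $\delta(1-\alpha)/(2\alpha(1-\alpha)R+32\alpha)$ rather than the na\"ive $W/(2\alpha n)$ that a direct application of Hamming's bound to $V$ viewed as a code inside $\F_2^T$ would give. The $(1-\alpha)$ in the numerator reflects the share of the distance budget $\delta m$ not already accounted for by $\mathrm{wt}(v)\le\alpha n$, while the denominator $2\alpha(1-\alpha)R+32\alpha$ corresponds, in units of $m$, to twice the effective joint-code length $\alpha(1-\alpha)n+16\alpha m$. Once the Hamming inequality is set up in exactly these quantities, simplifying into the stated closed form and optimizing over $\alpha$ is routine.
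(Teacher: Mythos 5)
Your overall framing (Friedman--Shokrollahi-style: restrict to light columns, analyze the kernel of the low-rank part, apply a packing bound) is the right skeleton and matches the paper's in spirit, and your Hamming-bound packaging at the end is an equivalent dual of the paper's direct ``all subset-sums of size $k/2$ are distinct'' count. However, the averaging step is where the proposal genuinely breaks, and it is not a bookkeeping issue as you suggest --- the bound you set up can be vacuous for exactly the parameter ranges where the lemma is applied.

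You bound the \emph{total} weight of $B$ restricted to $T$ by $16\alpha(m-n)$ and conclude $\mathrm{wt}(Bv)\le 16\alpha(m-n)$ uniformly for all $v$ supported on $T$. This gives a minimum-weight bound $W=\delta m-16\alpha(m-n)$ for $V=\ker(A')\cap L$ that does \emph{not} scale with $\mathrm{wt}(v)$. When the rate $R=n/m$ is small (as in the Justesen-code application, where $R\approx 0.15$ and $m\approx 6.7n$), the quantity $16\alpha(m-n)$ swamps $\delta m$ unless $\alpha$ is tiny; plugging in $\delta=0.077$, $R=0.15$, $\alpha=0.182$ gives $W<0$, and one can check no choice of $\alpha\in(0,1)$ simultaneously keeps $W>0$ and keeps the entropy argument in $(0,1/2]$. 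The paper instead applies Markov to conclude that \emph{each} of the $\alpha n$ lightest columns has weight at most $\frac{16(m-n)}{(1-\alpha)n}$, yielding the per-vector bound $\mathrm{wt}(Bv)\le\mathrm{wt}(v)\cdot\frac{16(m-n)}{(1-\alpha)n}$. Feeding this into $\mathrm{wt}(v)+\mathrm{wt}(Bv)\ge\delta m$ gives a self-referential inequality whose solution $\mathrm{wt}(v)>\beta n$ with $\beta=\frac{\delta(1-\alpha)}{(1-\alpha)R+16}$ is precisely what produces the argument $\frac{\beta}{2\alpha}=\frac{\delta(1-\alpha)}{2\alpha(1-\alpha)R+32\alpha}$ in the entropy. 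Your post-hoc explanation of the target formula does not hold up either: the ``effective joint-code length $\alpha(1-\alpha)n+16\alpha m$'' you propose is not the length $\alpha n+16\alpha(m-n)$ that your own construction of $V'$ actually yields, and $\delta(1-\alpha)$ is not ``the distance budget not accounted for by $\mathrm{wt}(v)\le\alpha n$'' (that would be $\delta m - \alpha n$). Replace the total-weight averaging with the Markov per-column bound and the rest of your argument (kernel $V$, sphere-packing on a length-$\alpha n$ code of minimum distance $>\beta n$) goes through and recovers the stated formula.
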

\begin{proof}
We will show that for every $16$-sparse matrix $B$,
\begin{align*}
\rk(A\oplus B) > \alpha n \cdot \H\left(\frac{\delta(1-\alpha)}{2\alpha(1-\alpha)R+32\alpha}\right)-o(n) \,.
\end{align*}
First we take the $\alpha n$ sparsest columns of $B$. By Markov's inequality, each of them has at most $\frac{16m}{(1-\alpha)n}$ non-zero entries. Let $A',B',M'\in\F^{m\times \alpha n}$ be the submatrices of $A$, $B$, and $M$ corresponding to this set of $\alpha n$ columns. For a~vector $x\in\F^n$, let $|x|$ be the number of non-zero elements in it. 

Since $M$ generates a code with relative distance $\delta$, we have that for every non-zero $x\in\F^n$, $|Mx|\geq \delta m$. From $Mx=
\begin{bmatrix}
I \\ A
\end{bmatrix}
x=
\begin{bmatrix}
x \\ Ax
\end{bmatrix}
,\text{ we have that }
|Ax|\geq \delta m - |x|$.
Since this holds for every non-zero~$x$, including~$x$ with zeros in all coordinates \emph{not} in~$A'$, we get that for every $x\in\F^{\alpha n}$, $|A'x|\geq \delta m - |x|$.

Now we only consider non-zero $x\in\F^{\alpha n}$ with exactly $k=\beta n$ ones where $\beta=\frac{\delta(1-\alpha)}{(1-\alpha)R+16}-o(1)$. For such an $x$,
\begin{align*}
|(A'\oplus B')x|
\geq |A'x|-|B'x|
\geq \delta m - |x| - |x|\cdot\frac{16m}{(1-\alpha)n}
\geq \delta m - \beta n\left(1+\frac{16m}{(1-\alpha)n}\right)
>0
\end{align*}
due to the choice of $\beta$.
This implies that all linear combinations of exactly $k/2$ columns from $A'\oplus B'$ are distinct. That is, the columns of $A'\oplus B'$ span at least $\binom{\alpha n}{k/2}$ points in $\F^m$, and
\begin{align*}
\rk(A\oplus B)\geq \rk(A'\oplus B') &\geq \log\binom{\alpha n}{k/2} \\&=\alpha n \cdot \H(\beta/{2\alpha})-o(n)
\\&=\alpha n \cdot \H\left(\frac{\delta(1-\alpha)}{2\alpha(1-\alpha)R+32\alpha}\right)-o(n) \, .
\end{align*}
\end{proof}
Let us consider Justesen's code~\cite{justesen1972class},~\cite[Chapter~10, §11, Theorem~12]{macwilliams1977theory}. For $\delta=0.077$, we have an efficient construction of a linear code with rate $R=0.15$. In Lemma~\ref{lem:rigidity}, we set $\alpha=0.182$ and get that this matrix is rigid for rank $r>\frac{n}{64}$ beating the bound from~\cite{pudlak1991computation} (at the price of having $m-n=n(1/R-1)$ outputs).

If we take the concatenation of a Reed-Solomon code (as the outer code) and an optimal linear inner code, then for every $\delta$ we can construct in polynomial time a code with relative distance $\delta$ matching the Zyablov bound (see, e.g., the discussion in~\cite{alon1992construction}):
\begin{align*}
R=\max_{\delta\leq\mu\leq0.5}\left( \left(1-\H(\mu)\right)\left(1-\frac{\delta}{\mu}\right) \right) \, .
\end{align*}
In particular, if we take such a code with $\delta=0.49$, then in the Zyablov bound we set $\mu=0.493$ and get $R\approx 8 \cdot10^{-7}$. Now we set $\alpha=0.252$ in Lemma~\ref{lem:rigidity}, and get rigidity for rank as high as $r>\frac{n}{15}$ (at the price of having too many outputs).

\subsection{Open Problems}
We conclude with a short summary of pseudorandom objects which would lead to new circuit lower bounds via depth reductions described in Section~\ref{sec:reductions}.

\begin{openproblem}
Prove that E$^{\text{NP}}$ contains a language $f$ having one of the following properties:
\begin{itemize}
\item $f$ cannot be computed by an $\dm{2^{0.8n}}{n\cdot 2^{15}}{16}$.
\item $f$ is a disperser for varieties of size at least $2^{0.2n}$ defined by $1.05n$ polynomials each of which depends on at most $16$ variables (and, thus, has degree at most $16$).
\item $f$ is a linear function defined by a matrix $M\in\F^{n\times n}$ of rigidity $\R_M(0.8n)>16$ (that is, in order to decrease the rank of $M$ to $0.8n$, one has to change more than $16$ elements in some row of $M$).
\end{itemize}
\end{openproblem}

\begin{openproblem}
Show that every DeMorgan formula of size $s$ has a probabilistic polynomial over $\F_2$ of degree $s^{0.99}$ and error $1/3$, or give evidence this is not true. We conjecture the degree can be made $O(\sqrt{s})$.
\end{openproblem}


\subsection*{Acknowledgement}
We thank Navid Talebanfard and Emanuele Viola for helpful discussions.

\bibliographystyle{alpha}
\bibliography{refs}

\end{document}